\newtheorem{theorem}{Theorem}[section] 
\newtheorem{lemma}[theorem]{Lemma} 
\newtheorem{corollary}[theorem]{Corollary} 
\newtheorem{definition}[theorem]{Definition}  
 \colorlet{colorCarreUpRight}{green!30!black}
 \colorlet{colorCarreUpLeft}{green!60!black}
 \colorlet{colorCarreDownRight}{yellow!50!black}
 \colorlet{colorCarreDownLeft}{yellow!90!black}
 \colorlet{colorCarreBorder}{black}
 \colorlet{colorTroisQuart}{white}
 \colorlet{colorUnQuartBas}{red!30}
 \colorlet{colorSignalCarres}{red}
 \colorlet{colorSignalQuart}{black}
 \colorlet{colorSignalBas}{cyan}
 \tikzstyle{carreUpRight}=[pattern=north west lines,pattern color=black]
 \tikzstyle{carreDownRight}=[pattern=north west lines,pattern color=black]
 \tikzstyle{carreUpLeft}=[color=colorCarreUpLeft]
 \tikzstyle{carreDownLeft}=[color=colorCarreDownLeft]
 \tikzstyle{carreBorder}=[color=colorCarreBorder]
 \tikzstyle{troisQuart}=[color=colorTroisQuart]
 \tikzstyle{unQuartBas}=[pattern=dots]
 \tikzstyle{signalCarres}=[color=colorSignalCarres]
 \tikzstyle{signalBas}=[color=colorSignalBas]
 \tikzstyle{signalQuart}=[color=colorSignalQuart]
 \tikzstyle{bGcarreUpRight}=[fill=colorCarreUpRight]
 \tikzstyle{bGcarreUpLeft}=[color=colorCarreUpLeft]
 \tikzstyle{bGcarreDownLeft}=[color=colorCarreDownLeft]
 \tikzstyle{bGcarreDownRight}=[color=colorCarreDownRight]
 \tikzstyle{bGcarreBorder}=[color=colorCarreBorder]
 \tikzstyle{bGtroisQuart}=[color=colorTroisQuart]
 \tikzstyle{bGunQuartBas}=[color=colorUnQuartBas]
 \tikzstyle{bGsignalCarres}=[color=colorSignalCarres]
 \tikzstyle{bGsignalBas}=[color=colorSignalBas]
 \tikzstyle{bGsignalQuart}=[color=colorSignalQuart]
\date{}
\newcommand{\ZZ}{\mathbb{Z}}
\newcommand{\NN}{\mathbb{N}}
\newcommand{\pizu}{$\Pi^0_1$\xspace}
\newcommand{\cantor}{\left\{0,1\right\}^\NN}
\newcommand{\turinf}{\leq_T}
\newcommand{\turequiv}{\equiv_T}
\newcommand{\turdeg}{\deg_T}
\definecolor{darkgreen}{RGB}{44,111,57}
\newcommand{\idest}{\textit{i.e.}}
\begin{document}
\title{Turing degrees of multidimensional SFTs}

\author{Emmanuel Jeandel\thanks{mail: Emmanuel.Jeandel@lif.univ-mrs.fr} \and 
Pascal Vanier \thanks{mail: Pascal.Vanier@lif.univ-mrs.fr}\\Laboratoire
d'informatique fondamentale de Marseille (LIF)\\
Aix-Marseille Universit\'e, CNRS\\
39 rue Joliot-Curie, 13453 Marseille Cedex 13, FRANCE}

\maketitle

\begin{abstract}
In this paper we are interested in computability aspects of subshifts and in
particular Turing degrees of 2-dimensional SFTs (i.e. tilings).
To be more precise, we prove that given any \pizu class $P$ of $\cantor$
there is a SFT $X$ such that $P\times\ZZ^2$
is recursively homeomorphic to $X\setminus U$ where $U$ is a computable set of
points.  As a consequence, if $P$ contains a computable member, $P$ and $X$ have
the exact same set of Turing degrees. On the other hand, we prove that if $X$
contains only non-computable members, some of its members always have
different but comparable degrees. This gives a fairly complete study of
Turing degrees of SFTs.
\end{abstract}


Wang tiles have been introduced by Wang \cite{WangII} to study fragments of 
first order logic. Independently, subshifts of finite type (SFTs) were
introduced to study dynamical systems. From a computational and dynamical
perspective,
SFTs and Wang tiles are equivalent, and most recursive-flavoured results about
SFTs were proved in a Wang tile setting. 

Knowing whether a tileset can tile the plane with a given tile at the 
origin (also known as the origin constrained domino problem) was proved
undecidable by Wang \cite{Wang3}. Knowing whether a tileset can tile the plane
in the general case was proved undecidable by Berger \cite{BergerPhd,Berger2}.

Understanding how complex, in the sense of recursion theory, the 
points of an SFT can be is a question that was first studied by Myers 
\cite{Myers} in 1974. Building on the work of Hanf \cite{Hanf1}, he gave a
tileset with no computable tilings. Durand/Levin/Shen \cite{BDLS} showed, 40
years later, how to build a tileset for which all
tilings have high Kolmogorov complexity.

A \pizu class (of sets) is an effectively closed subset of $\cantor$, or
equivalently the set of oracles on which a given Turing machine does not halt.
\pizu classes occur naturally in various areas in computer science and
recursive mathematics, see e.g. \cite{CenzerRec, SimpsonSurvey} 
and the upcoming book \cite{Cenzerbook}. 
It is easy to see that any SFT is a \pizu class
(up to a computable coding of $\Sigma^{\mathbb{Z}^2}$ into $\cantor$).
This has various consequences. As an example, every non-empty SFT
contains a point which is not Turing-hard 
(see Durand/Levin/Shen \cite{BDLS} for a
self-contained proof). The main question is how different SFTs
are from \pizu classes. In the one-dimensional case,
some answers to these questions were given by Cenzer/Dashti/King/Tosca/Wyman \cite{Dashti,
CenzerShift,CenzerTocs}.

The main result in this direction was obtained by Simpson \cite{MEdv},
building on the work of Hanf and Myers: for
every \pizu class $S$, there exists a SFT with the
same \emph{Medvedev} degree as $S$. The Medvedev degree roughly relates to the
``easiest'' Turing degree of $S$. What we are interested in 
is a stronger result: \emph{can we find for every \pizu class $S$ a SFT which has
the same Turing degrees?} We prove in this article that
this is true if $S$ contains a computable point but not always when this is not
the case. 
More exactly we build (Theorem \ref{thm:pizuhomtiling}) for every \pizu class $S$ 
a SFT for which the set of Turing degrees is exactly the same as 
for $S$ with the additional Turing degree of computable points.
We also show that SFTs that do not contain any computable point
always have points with different but comparable degrees 
(Corollary~\ref{cor:compdiffdegrees}), a property that is not true for all \pizu 
classes. In particular there exist \pizu classes that do
not have any points with comparable degrees.

As a consequence, as every \emph{countable} \pizu class contains a computable
point, the question is solved for countable sets: the sets of
Turing degrees of countable \pizu classes are the same as the sets of Turing
degrees of countable sets of tilings. In particular, there exist countable
sets of tilings with some non-computable points. This can be thought as a
two-dimensional version of Corollary~4.7 in \cite{CenzerTocs}.

This paper is organized as follows. After some preliminary definitions, we start with a
quick proof of a generalization of Hanf, already implicit in Simpson~\cite{MEdv}. 
We then build a very specific tileset, which forms a
grid-like structure while having only countably many tilings, all of them computable.
This tileset will then serve as the main ingredient to prove the result on the case
of classes with a computable point in section~\ref{S:pizuwithrec}.
In section~\ref{S:pizuwithoutrec} we finally show the result on classes without 
computable points.

\section{Preliminaries}\label{S:defs}
 \subsection{SFTs and tilings}
We give here some standard definitions and facts about multidimensional
subshifts, one may consult Lind~\cite{LindMulti} for more details.
 Let $\Sigma$ be a finite alphabet, the $d$-dimensional
full shift on $\Sigma$ is the set
$\Sigma^{\ZZ^d}=\left\{c=(c_x)_{x\in\ZZ^d} \middle\vert \forall x\in\ZZ^d,
c_x\in\Sigma\right\}$. For $v\in\ZZ^d$, the shift functions
$\sigma_v:\Sigma^{\ZZ^d}\to\Sigma^{\ZZ^d}$, are defined locally by
$\sigma_v(c_x)=c_{x+v}$. The full shift equipped with the distance
$d(x,y)=2^{-\min\left\{\left\|v\right\|\middle\vert v\in\ZZ^d,x_v\neq
y_v\right\}}$ is a compact, perfect, metric space on which the shift functions act
as homeomorphisms. An element of $\Sigma^{\ZZ^d}$ is called a
\emph{configuration}.

Every closed shift-invariant (invariant by application of any $\sigma_v$)
subset $X$ of $\Sigma^{\ZZ^d}$ is called a \emph{subshift}. An element of a
subshift is called a point of this subshift.

Alternatively, subshifts can be defined with the help of forbidden patterns. A
\emph{pattern} is a function $p:P \to \Sigma$, where $P$ is a finite subset of
$\ZZ^d$. Let $\mathcal F$ be a collection of \emph{forbidden} patterns, the
subset $X_F$ of $\Sigma^{\ZZ^d}$ containing only configurations having nowhere a
pattern of $F$. More formally, $X_{\mathcal F}$ is defined by
$$X_{\mathcal F}=\left\{x\in\Sigma^{\ZZ^d}\middle\vert \forall z\in\ZZ^d,\forall
p\in \mathcal F, x_{z+P}\neq p \right\}\textrm{.}$$

In particular, a subshift is said to be a \emph{subshift of finite type} (SFT)
when the collection of forbidden patterns is finite. Usually, the patterns used
are \emph{blocks} or \emph{$n$-blocks}, that is they are defined over a finite
subset $P$ of $\ZZ^d$ of the form ${\llbracket 0,n-1\rrbracket}^d$.

Given a subshift $X$, a block or
pattern $p$ is said to be \emph{extensible} if there exists $x\in X$ in which
$p$ appears, $p$ is also said to be extensible to $x$. 

In the rest of the paper, we will use the notation $\Sigma_X$ for the alphabet
of the subshift $X$.

 A subshift $X\subseteq\Sigma_X^{\ZZ^2}$ is a \emph{sofic shift}
if and only if there exists a SFT $Y\subseteq\Sigma_Y^{\ZZ^2}$  and a map
$f:\Sigma_Y\rightarrow \Sigma_X$ such that for any point $x\in X$, there exists
a point $y\in Y$ such that for all $z\in\ZZ^d, x_z=f(y_z)$.
 
 \emph{Wang tiles} are unit squares with colored edges which may not be flipped
or rotated. A \emph{tileset} $T$ is a finite set of Wang tiles. A
\emph{coloring of the plane} is a mapping $c:\ZZ^2\rightarrow T$ assigning a
Wang tile to each point of the plane. If all adjacent tiles of a
 coloring of the plane have matching edges, it is called a
tiling. 

 In particular, the set of tilings of a Wang tileset is a SFT on the alphabet
formed by the tiles. Conversely, any SFT is isomorphic to a Wang tileset. From a
recursivity point of view, one can say that SFTs and Wang tilesets are
equivalent. In this paper, we will be using both indiscriminately. In
particular, we denote by $X_T$ the SFT associated to a set of tiles $T$.
 
We say a SFT (tileset) is \emph{origin constrained} when the letter (tile) at
position $(0,0)$ is forced, that is to say, we only look at the valid tilings
having a given letter (tile) $t$ at  the origin.
 
 More information on SFTs may be found in Lind and  Marcus' book
\cite{LindMarkus}.

 The notion of \emph{Cantor-Bendixson derivative} is defined on set of
configurations. This notion was introduced for tilings by
Ballier/Durand/Jeandel \cite{BDJSTACS}. A configuration $c$ is said to be
\emph{isolated} in a set of configurations
 $C$ if there exists a pattern $p$ such that $c$ is the only configuration of
$C$ containing $p$.
 The Cantor-Bendixson derivative of $C$ is denoted  by $D(C)$ and consists of all
configurations of $C$
 except the isolated ones. We define $C^{(\lambda)}$ inductively for any ordinal
$\lambda$:
 \begin{itemize}
   \item $C^{(0)}=S$
   \item $C^{(\lambda+1)}=D\left( C^{\left( \lambda\right)}) \right)$
   \item $C^{(\lambda)} = \bigcap_{\gamma<\lambda} C^{(\gamma)}$ when $\lambda$
is a limit ordinal.
 \end{itemize}

 The \emph{Cantor-Bendixson rank} of $C$, denoted by $CB(C)$, is defined as the
first ordinal $\lambda$ such that
 $C^{(\lambda)}=C^{(\lambda+1)}$. If $C$ is countable, then $C^{CB(C)}$ is empty. 
An element $x$ is of rank $\lambda$ in $C$ if
$\lambda$ is the  least ordinal such that $x\not\in C^{(\lambda)}$.
 
 A configuration $x$ is \emph{periodic}, if there exists $n\in\NN^{*}$ such that
$\sigma_{ne_i}(x)=x$, for any $i\in\{1,\dots,d\}$, where the $e_i$'s form the
standard basis. A \emph{vector of periodicity} of a configuration is a vector
$v\in\ZZ^d\setminus\{(0,\dots,0)\}$ such that $\sigma_v(x)=x$.
 A configuration $x$ is \emph{quasiperiodic} (see Durand \cite{Durand1999}
for instance) if for any pattern $p$ appearing in $x$, there exists $N$ such
that this pattern appears in all $N^d$ cubes in $x$. In particular, a periodic
point is quasiperiodic. A configuration
is \emph{strictly quasiperiodic} if it is quasiperiodic and not periodic.
A subshift is \emph{minimal} if it is non-empty and contains no proper
non-empty subshift. Equivalently, all its points have the same
patterns. In this case, it contains only quasiperiodic points.
 It is known that every subshift contains a minimal subshift, see
  e.g. Durand \cite{Durand1999}.

\subsection{Computability background}
 A \emph{\pizu class} $P\subseteq\cantor$ is a class of infinite sequences
on $\{0,1\}$ for which there exists a Turing machine that given $x\in\left\{
0,1 \right\}^\NN$ as an oracle halts if and only if $x\not\in P$. Equivalently,
a class $S\subseteq\cantor$ is \pizu if there exists a computable set $L$
so that $x\in S$ if and only if no prefix of $x$ is in $L$. An element of a \pizu class is
called a \emph{member} of this class.

 We say that two sets $S,S'$ are \emph{recursively homeomorphic} if there exists
a bijective computable function $f:S\rightarrow S'$. That is to say there are
two Turing machines $M$ (resp. $M'$) such that given a member of $S$ (resp.
$S'$) computes a member of $S'$ (resp. $S$). Furthermore, for any $s\in S,s'\in
S'$ such that $s'$ is computed by $M$ from $s$, $M'$ computes $s$ from $s'$.

The \emph{Cantor-Bendixson rank} of $S$, is well defined similarly as for
subshifts.
 
 See Cenzer/Remmel~\cite{CenzerRec} for \pizu classes and Kechris~\cite{Kechris} 
 for Cantor-Bendixson rank and derivative. 
 
 For $x,y\in\cantor$ we say that $x$ is \emph{Turing-reducible} to $y$ 
 if $y$ is computable by a Turing machine using $x$ as an oracle and we write
$y\turinf x$. If $x\turinf y$ and $y\turinf x$, we say that $x$ and $y$ are 
\emph{Turing-equivalent} and we write $x\turequiv y$. The \emph{Turing degree}
of $x\in\cantor$, denoted by $\turdeg x$, is its equivalence class under the relation $\turequiv$.

\subsection{Subshifts and \pizu classes}
 
As is clear from the definitions, SFTs in any dimension are \pizu
classes. More generally, \emph{effective} subshifts, see e.g. Cenzer/Dashti/King
\cite{CenzerShift}), that is subshifts defined by a computable (or
equivalently, in this case, by a computably enumerable) set of forbidden
patterns are \pizu classes. As such, they enjoy similar properties.
In particular, there exist many ``basis theorems'', \idest theorems that assert
that
any \pizu (non-empty) class has a member with some specific property.

As an example, every countable \pizu class has a computable member,
see e.g. Cenzer/Remmel \cite{Cenzerbook}.
For subshifts, we can say a bit more: every countable subshift has a
periodic (hence computable) member.

Some basis theorems for \pizu classes can be easily reproven in the context
of subshifts: The proof that every \pizu class has a point of
\emph{low} degree (the formal definition is not important here, but it
can be interpreted as ``nearly computable'') \cite{JockuschSoare72a} was reproven for  subshifts
(actually tilings) in Durand/Levin/Shen \cite{BDLS}.

 \section{\pizu classes and origin constrained tilings}\label{S:hanf}
A straighforward corollary  of Hanf \cite{Hanf1} is that every \pizu class is recursively
homeomorphic to an origin constrained SFTs and conversely. This is stated
explicitly in Simpson \cite{MEdv}.

\begin{figure}
 \begin{center}
\scalebox{0.8}{
\begin{tikzpicture}[auto,scale=0.5]
\filldraw[fill=blue!20] (0,0) rectangle (4,4);
\node[draw,circle] (etat) at (1,2) {$s$};
\draw[-latex] (1,0) node[below] (basetat) {$s$} -- (etat);
\draw[-latex] (etat) -- (1,3) -- (0,3);
\node[draw,rectangle,fill=white] (ruban) at (2,2) {$a$};
\draw[-latex] (2,1) -- (ruban);
\node[left] at (0,3) (hautetat) {$s'$};
\node[above] at (2,4) (hautruban) {$a'$};
\node[below] at (2,0) (basruban) {$a$};
\begin{scope}[shift={(6,0)}]
\filldraw[fill=blue!20] (0,0) rectangle (4,4);
\node[draw,circle] (etat) at (1,2) {$s$};
\draw[-latex] (1,0) node[below] (basetat) {$s$} -- (etat);
\draw[-latex] (etat) -- (1,4);
\node[draw,rectangle,fill=white] (ruban) at (2,2) {$a$};
\draw[-latex] (2,1) -- (ruban);
\node[above] at (1,4) (hautetat) {$s'$};
\node[above] at (2,4) (hautruban) {$a'$};
\node[below] at (2,0) (basruban) {$a$};
\end{scope}
\begin{scope}[shift={(12,0)}]
\filldraw[fill=blue!20] (0,0) rectangle (4,4);
\node[draw,circle] (etat) at (1,2) {$s$};
\draw[-latex] (1,0) node[below] (basetat) {$s$} -- (etat);
\draw[-latex] (etat) -- (1,3) -- (4,3);
\node[draw,rectangle,fill=white] (ruban) at (2,2) {$a$};
\draw[-latex] (2,1) -- (ruban);
\node[right] at (4,3) (hautetat) {$s'$};
\node[above] at (2,4) (hautruban) {$a'$};
\node[below] at (2,0) (basruban) {$a$};
\end{scope}
\begin{scope}[shift={(18,0)}]
\filldraw[fill=blue!20] (0,0) rectangle (4,4);
\draw[-latex]  (0,3) -- (1,3) -- (1,4);
\node[draw,rectangle,fill=white] (ruban) at (2,2) {$a$};
\node[left] at (0,3)  (basetat) {$s$};
\node[above] at (1,4) (hautetat) {$s$};
\node[above] at (2,4) (hautruban) {$a$};
\node[below] at (2,0) (basruban) {$a$};
\end{scope}
\begin{scope}[shift={(0,-6)}]
\filldraw[fill=blue!20] (0,0) rectangle (4,4);
\draw[-latex]  (4,3) -- (1,3) -- (1,4);
\node[draw,rectangle,fill=white] (ruban) at (2,2) {$a$};
\node[right] at (4,3)  (basetat) {$s$};
\node[above] at (1,4) (hautetat) {$s$};
\node[above] at (2,4) (hautruban) {$a$};
\node[below] at (2,0) (basruban) {$a$};
\end{scope}
\begin{scope}[shift={(6,-6)}]
\filldraw[fill=blue!20] (0,0) rectangle (4,4);
\node[draw,rectangle,fill=white] (ruban) at (2,2) {$a$};
\node[above] at (2,4) (hautruban) {$a$};
\node[below] at (2,0) (basruban) {$a$};
\end{scope}
\begin{scope}[shift={(12,-6)}]
\fill[color=blue!20] (0,1) rectangle (4,4);
\draw (0,0) rectangle (4,4);
\node[draw,circle] (etat) at (1,2) {$s_0$};
\draw[-latex] (etat) -- (1,3) -- (4,3);
\draw[-latex] (0,2) -- (etat);
\node[draw,rectangle,fill=white] (ruban) at (2,2) {$a$};
\draw[-latex] (2,1) -- (ruban);
\node[left] at (0,2) (basetat) {$s_0$};
\node[right] at (4,3) (hautetat) {$s'$};
\node[above] at (2,4) (hautruban) {$a'$};
\end{scope}
\begin{scope}[shift={(18,-6)}]
\fill[color=blue!20] (0,1) rectangle (4,4);
\draw (0,0) rectangle (4,4);
\node[draw,circle] (etat) at (1,2) {$s_0$};
\draw[-latex] (etat) -- (1,4);
\draw[-latex] (0,2) -- (etat);
\node[draw,rectangle,fill=white] (ruban) at (2,2) {$a$};
\draw[-latex] (2,1) -- (ruban);
\node[above] at (1,4) (hautetat) {$s'$};
\node[above] at (2,4) (hautruban) {$a'$};
\node[left] at (0,2) (basetat) {$s_0$};
\end{scope}
\begin{scope}[shift={(0,-12)}]
\fill[color=blue!20] (0,1) rectangle (4,4);
\draw (0,0) rectangle (4,4);
\draw[-latex] (0,3) -- (1,3) -- (1,4);
\node[draw,rectangle,fill=white] (ruban) at (2,2) {$B$};
\node[above] at (1,4) (hautetat) {$s$};
\node[left] at (0,3) (basetat) {$s$};
\node[above] at (2,4) (hautruban) {$B$};
\end{scope}
\begin{scope}[shift={(6,-12)}]
\fill[color=blue!20] (0,1) rectangle (4,4);
\draw (0,0) rectangle (4,4);
\node[draw,rectangle,fill=white] (ruban) at (2,2) {$B$};
\node[above] at (2,4) (hautruban) {$B$};
\end{scope}
\begin{scope}[shift={(12,-12)}]
\fill[color=blue!20] (4,0) rectangle (1,4);
\draw (0,0) rectangle (4,4);
\end{scope}
\begin{scope}[shift={(18,-12)}]
\fill[color=blue!20] (4,1) rectangle (1,4);
\draw (0,0) rectangle (4,4);
\draw[-latex] (2,2) -- (4,2);
\node[right] at (4,2) (basetat) {$s_0$};
\end{scope}
\end{tikzpicture}
}
 
 \end{center}
 \caption{A set of Wang tiles, encoding computation of a Turing
machine: the states are in the circles and the tape is in the rectangles. The bottom right tile
starts computations. A
tiling containing this tile contains the space-time diagram of some
run of the Turing machine.}
 \label{fig:tmencoding}
\end{figure}
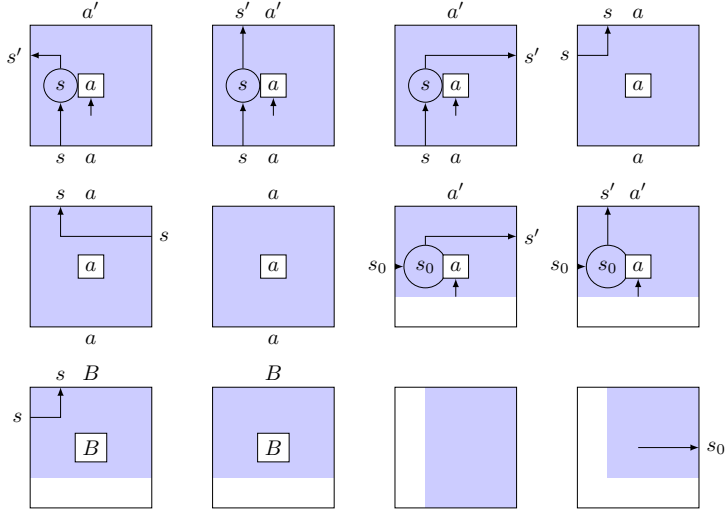

\begin{theorem}
  Given any \pizu class $P\subseteq\cantor$, there exists a SFT $X$ and a
letter $t\in\Sigma_X$ such that each origin constrained point corresponds to a
member of $P$.
\end{theorem}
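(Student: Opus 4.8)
The plan is to realize the members of $P$ as the initial rows of space-time diagrams of a Turing machine that searches for a ``rejecting'' prefix, and to use the tile set of Figure~\ref{fig:tmencoding} to forbid any tiling in which the search succeeds. First I would replace $P$ by a computable predicate: since $P$ is \pizu, fix a computable set $L\subseteq\{0,1\}^{*}$ such that $x\in P$ if and only if no prefix of $x$ lies in $L$. Let $M$ be the Turing machine that, started on a one-way-infinite tape whose cells $0,1,2,\dots$ carry the bits $x_0,x_1,x_2,\dots$, repeatedly reads one more input bit, forms the current prefix, and halts exactly when that prefix is found to lie in $L$. Then $M$ runs forever on input $x$ if and only if $x\in P$.

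Next I would turn $M$ into Wang tiles exactly as in Figure~\ref{fig:tmencoding}: a tiling is forced to encode, row by row upward, the successive tape configurations of a run of $M$, with the head and state carried by matching colours and the transition function implemented by the local matching rules. The crucial feature is that there is no tile for a halting transition, so a run that halts cannot be completed into a tiling. I take $t$ to be the distinguished start tile carrying the initial state $s_0$; it sits at the origin and emits a rightward signal marking its row as the \emph{input row}. Tiles in the input row are required to carry a bit of $\{0,1\}$ (which is otherwise unconstrained), tiles strictly below it are blank, and the deterministic computation of $M$ develops in the region above. Everything outside the part of the plane actually used by the run is filled with a single blank tile, so that the whole configuration is determined by the contents of the input row.

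With the tile set fixed, I would check the correspondence in both directions. Given an origin-constrained point, reading the bits along the input row yields a sequence $x\in\cantor$; since the point is a genuine tiling, the simulated run of $M$ never reaches a halting transition, so $M$ does not halt on $x$ and hence $x\in P$. Conversely, given $x\in P$, placing $t$ at the origin, writing $x$ along the input row, and drawing the (non-halting, hence infinite) space-time diagram of $M$ above it with blanks elsewhere produces exactly one valid tiling carrying $t$ at the origin. Both directions are computable — reading off $x$ is immediate, and producing the tiling from $x$ only requires simulating $M$ — so $x\mapsto$ its tiling is a recursive homeomorphism between the members of $P$ and the origin-constrained points of the resulting SFT $X$.

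I expect the main obstacle to be bookkeeping in the tile set rather than any conceptual difficulty: one must arrange that the start tile $t$ forces a single one-way-infinite input row of otherwise free bits, that the computation unfolds deterministically in exactly one region of the plane, and that the blank filling of the complement is itself forced, so that distinct members of $P$ give distinct tilings and each tiling recovers a unique member of $P$. These are precisely the points handled by Hanf's construction \cite{Hanf1}, made explicit for this setting by Simpson \cite{MEdv}, and I would invoke that construction to discharge them.
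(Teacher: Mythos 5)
Your proposal is correct and follows essentially the same route as the paper: encode a Turing machine as Wang tiles as in Figure~\ref{fig:tmencoding}, place the candidate sequence $x$ in the tiling as unconstrained bits, and let the absence of a halting tile force the machine (which halts iff $x\notin P$) to run forever, so that origin-constrained tilings correspond exactly to members of $P$. The only cosmetic difference is that you use the ``computable set of bad prefixes'' characterization of \pizu classes with the input written on the initial tape row, whereas the paper uses the oracle-machine characterization and carries the oracle on a second, vertically copied track of the tape; the two are interchangeable here.
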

\begin{proof}
Let $P$ be a \pizu class, and $M$ the Turing machine that proves it,
that is $M$ given $x\in \cantor$ as an oracle halts if and only if
$x\not\in P$.

    We use the classic encoding of Turing machines as Wang tiles, see
fig.~\ref{fig:tmencoding}.  We modify all tiles containing a symbol
from the tape, to allow them to contain a second symbol. This symbol
is copied vertically. All these second symbols represent the oracle. 

Then the SFT constrained by the tile starting the computation contains
exactly the runs of the Turing machine with members of $P$ on the
oracle tape.
\end{proof}

\begin{corollary}
  Any \pizu class $P$ of $\cantor$ is recursively homeomorphic to an origin
constrained
  SFT.
\end{corollary}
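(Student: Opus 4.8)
The plan is to upgrade the correspondence supplied by the theorem into an explicit computable bijection with computable inverse, which is exactly what recursive homeomorphism demands. Fix the SFT $X$ and the distinguished tile $t$ produced by the theorem, and let $Y\subseteq X$ be the associated origin constrained SFT, \idest the set of points with $t$ at the origin. First I would pin down the decoding map $f\colon Y\to P$: given an origin constrained tiling, read off the content of the oracle track. By construction each oracle symbol is copied vertically, so the $n$-th bit of $f(y)$ is the symbol carried by the $n$-th column and can be recovered by inspecting a single tile of $y$ at one fixed coordinate. Thus $f$ is a local recoding, hence computable under the standard computable coding of $\Sigma_X^{\ZZ^2}$ into $\cantor$.

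Next I would build the candidate inverse $g\colon P\to Y$: given $x\in P$, output the space--time diagram of $M$ run with $x$ on the oracle track, starting from $t$ at the origin. To emit the tile at a position $(i,j)$ it suffices to simulate $M$ for $j$ steps on a bounded portion of the tape, which consults only finitely many bits of $x$; this is uniformly computable, so $g$ is computable. The point that makes $g$ total and well defined is that $x\in P$ means $M$ does not halt on $x$, so the run yields an infinite, everywhere-defined diagram that tiles the whole plane; conversely, a halting run would leave no legal tile to place at the halting step, so the tiling could not be completed. This is precisely why origin constrained points correspond to members of $P$ rather than to arbitrary oracles.

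Finally I would verify that $f$ and $g$ are mutually inverse, which delivers the bijection. Because $M$ is deterministic, for a fixed oracle $x$ the run is unique, so $g(x)$ is the only origin constrained tiling whose oracle track reads $x$; hence $f(g(x))=x$ and $g(f(y))=y$. The main obstacle, and the only place needing care, is establishing that the correspondence is genuinely one-to-one in both directions \emph{and} computable at the same time: existence of a tiling for each member comes from non-halting, uniqueness comes from determinism, and computability of $g$ rests on the fact that the tile at a given coordinate depends on only a finite, computably bounded prefix of the oracle. Once these are checked, $f$ (with inverse $g$) witnesses that $P$ is recursively homeomorphic to the origin constrained SFT $Y$.
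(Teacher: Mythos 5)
Your proposal is correct and follows exactly the route the paper intends: the corollary is stated without proof because it is immediate from the construction in the theorem, and your decoding map (read the vertically-copied oracle track) and encoding map (compute the deterministic space--time diagram from the oracle, which needs only finitely many oracle bits per tile) are precisely the witnesses for the recursive homeomorphism. The only detail worth being explicit about, which depends on the exact tileset, is that the tiles away from the computation (the lower half-plane and the portion of the plane not carrying oracle bits) are forced, so that uniqueness of the tiling for a given oracle really holds.
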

 
\section{Producing a sparse grid}\label{S:tileset}

The main problem in the previous construction is that points which do not
have the given letter at the origin can be very wild: they may correspond to
configurations with no computation (no head of the Turing Machine) or
computations starting from an arbitrary (not initial) configuration. A way to
solve this problem is described in Myers' paper \cite{Myers} but is unsuitable
for our purposes (It was however used by Simpson
to obtain a weaker result on Medvedev degrees, see   \cite{MEdv}).
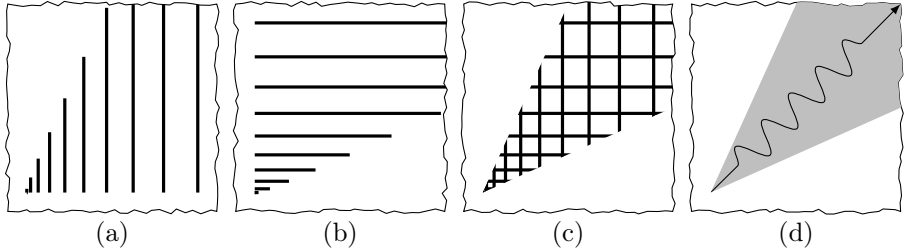
\begin{figure}[htbp]
 \begin{center}
 \begin{tikzpicture}[scale=0.025]
  \def\lx{10};
  \def\linewidth{1.2pt};
  \begin{scope}
  \begin{scope}
    \clip[draw,decorate,decoration={random steps, segment length=3pt,
amplitude=1pt}] (-10,-10) rectangle (100,100);
    \foreach \i in {0,...,\lx}{
      \def\x{\i*\i-\i};
      \draw[line width=\linewidth] (\x,0) -- (\x,\i*\i*2);
    }
  \end{scope}
  \node[below] at (45,-10) {(a)};
  \end{scope}
  \begin{scope}[shift={(120,0)}]
  \begin{scope}
    \clip[draw,decorate,decoration={random steps, segment length=3pt,
amplitude=1pt}] (-10,-10) rectangle (100,100);
    \foreach \i in {0,...,\lx}{
      \def\x{\i*\i-\i};
      \draw[line width=\linewidth] (0,\x) -- (\i*\i*2,\x);
    }
  \end{scope}
  \node[below] at (45,-10) {(b)};
  \end{scope}
  \begin{scope}[shift={(240,0)}]
  \begin{scope}
    \clip[draw,decorate,decoration={random steps, segment length=3pt,
amplitude=1pt}] (-10,-10) rectangle (100,100);
    \begin{scope}
      \clip (0,0) -- (\lx*\lx*2,\lx*\lx-\lx) -- (\lx*\lx-\lx,\lx*\lx*2) --
cycle;
      \foreach \i in {0,...,\lx}{
        \def\x{\i*\i-\i};
        \draw[line width=\linewidth] (0,\x) -- (\i*\i*2,\x);
        \draw[line width=\linewidth] (\x,0) -- (\x,\i*\i*2);
      }
    \end{scope}
  \end{scope}
  \node[below] at (45,-10) {(c)};
  \end{scope}
   \begin{scope}[shift={(360,0)}]
  \begin{scope}[even odd rule]
    \clip[draw,decorate,decoration={random steps, segment length=3pt,
amplitude=1pt}] (-10,-10) rectangle (100,100);
    \begin{scope}
      \clip (0,0) -- (\lx*\lx*2,\lx*\lx-\lx) -- (\lx*\lx-\lx,\lx*\lx*2) --
cycle;
      \fill[color=lightgray] (-10,-10) rectangle (110,110);
      \draw[-latex,decorate,decoration={snake,amplitude=2mm,segment
length=5mm,pre length=5mm,post length=5mm}]
      (0,0) -- (\lx*\lx,\lx*\lx);

    \end{scope}
  \end{scope}
  \node[below] at (45,-10) {(d)};
  \end{scope}

 \end{tikzpicture}
 \end{center}
 \caption{The tiling in which the Turing machines will be encoded.}
 \label{fig:layers}
\end{figure}

Our idea is as follows: we build a SFT which will contain, among other points,
the \emph{sparse grid} of Figure~\ref{fig:layers}c. 
The interest being that all other points will have at most
one intersection of two black lines.
This means that if we put computation cells of a given
Turing machine in the intersection points, 
every point which is not of the form of Figure~\ref{fig:layers}c
will contain at most one cell of the Turing machine, and thus will contain no
computation.

\begin{figure}[htbp]
\begin{center}
\includegraphics{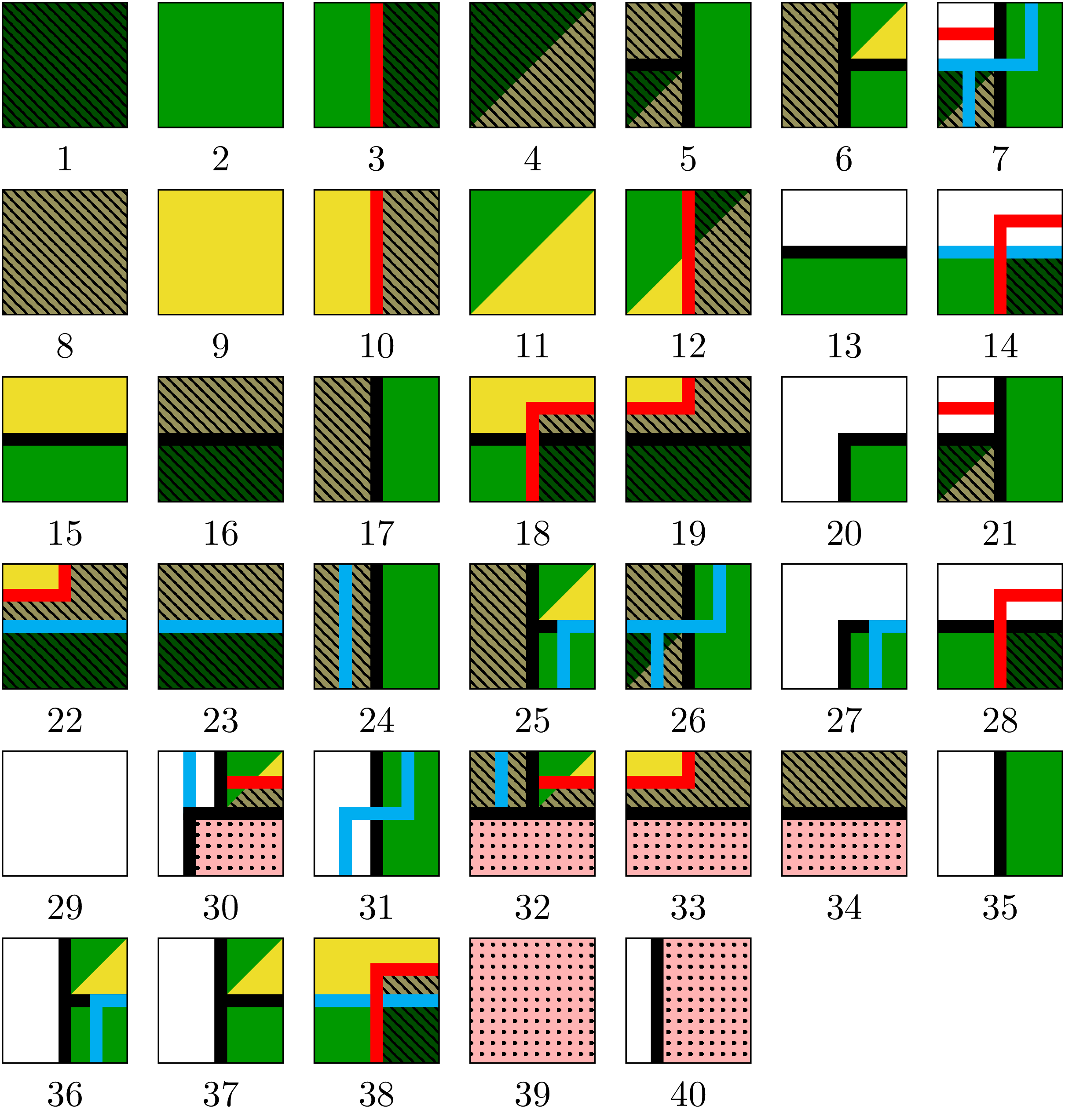}
\end{center}
\caption{Our set of Wang tiles $T$.}
\label{fig:tileset}
\end{figure}

To do this construction, we will first draw increasingly big
and distant columns as in Figure~\ref{fig:layers}a  and then
superimpose the same construction for rows as in Figure~\ref{fig:layers}b,
thus obtaining the grid of Figure~\ref{fig:layers}c.

It is then fairly straightforward to see how we can encode a Turing machine
inside a configuration
having the skeleton of Figure~\ref{fig:layers}c by looking at it diagonally:
time increases going
to the north-east and the tape is written on the north-west/south-east 
diagonals\footnote{Note that we have to wait for the diagonal to increase to have a new step of
computation, in order to have enough space on the tape.}.

Our set of tiles $T$ of Figure~\ref{fig:tileset} gives the skeleton of
Figure~\ref{fig:layers}a 
when forgetting everything but the black vertical borders. We will prove in this
section that it
is countable.
We set here the vocabulary:
\begin{itemize}
  \item tile 30 is the \emph{corner tile}
  \item tile 20 and 27 are the \emph{top tiles}
  \item tiles 30, 32, 33, 34 are the \emph{bottom tiles}
  \item a \emph{vertical line} is formed of a vertical succession of tiles containing a
vertical black line (tiles 5, 6, 7, 17, 21, 24, 25, 26, 31, 35, 36, 37), which may be ended
by bottom and/or top tiles. 
  \item a \emph{horizontal line} is formed of a horizontal succession of tiles
containing a horizontal black line (tiles 13, 14, 15, 16, 18, 19, 22, 23, 28, 32, 33, 34, 38), and
may be ended by tiles 5,6,7,25,26,36,37, thus forcing a vertical line at this end,
  \item a \emph{diagonal} is a diagonal succession (positions (i,j),(i+1,j+1),...) of
tiles among 4,12,11,
  \item a \emph{square} is a $\llbracket 0,k\rrbracket^2$ valid tiling such that $\{0\}\times
\llbracket 1,k-1\rrbracket$ and  $ \{k\}\times \llbracket 1,k-1\rrbracket$ are vertical lines, and
$\llbracket 1,k-1\rrbracket\times\{0\}$ and $\llbracket 1,k-1\rrbracket\times\{k\}$ are horizontal
lines. Remark that the color on the right of the first column and on the left of the last one force
the existence of a counting signal inbetween and of a diagonal tile on each of the $(i,i)$ positions
for
$0<i<k$.
  
  \item the \emph{increase signal}
  \tikz[scale=0.25]{\fill[bGsignalBas] (0,0) rectangle (1,0.2);\fill[signalBas]
(0,0) rectangle (1,0.2);}
  is formed by a path of tiles among 7, 14, 22, 23, 24, 25, 26, 27, 30, 31, 36, 38, such that the
blue signal is connected, this signal will force the squares to increase in size by exactly one in
each column.
  \item the \emph{counting signal}
  \tikz[scale=0.25]{\fill[bGsignalCarres] (0,0) rectangle
(0.2,1);\fill[signalCarres] (0,0) rectangle (0.2,1);}
  is formed by a path of tiles among 3 ,7, 10, 12, 14, 19, 22, 32, 33, 38, such that the counting
signal
is connected. It may be ended only by tiles 30,32 and 7,21. This signal will force the number of
squares in each column to be at most the size of these squares.
\end{itemize}

Note that whenever the corner tile appears in a
point, it is necessarily a shifted copy of the point on
Figure~\ref{fig:confalpha}: the corner tile forces the tiles on its right to be bottom tiles and
the first above to be tile 33, then on top of it must be tile 31 and then tile 27. This forces the
existence of the first square, \idest the first column. Then the increase signal forces the second
column to start with a square of size increased by one, and thus to have exactly one more square
(increase signal), and so on.

\begin{lemma}\label{lem:twolines}
  The SFT $X_T$ admits at most one point, up to translation, with two or more
vertical lines. This point is drawn on Figure~\ref{fig:confalpha}.
\end{lemma}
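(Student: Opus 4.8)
The plan is to show that the hypothesis of two vertical lines is so constraining that it forces, tile by tile, the whole point to be a translate of the configuration $\alpha$ depicted in Figure~\ref{fig:confalpha}. The paragraph preceding the statement already does half the work: it shows that wherever the \emph{corner tile} (tile 30) occurs, the matching rules propagate outward and pin down the entire point as a translate of $\alpha$. So the real content I would isolate is the reduction: \emph{any point of $X_T$ carrying at least two vertical lines must contain the corner tile.} Once this is established, uniqueness up to translation is immediate, since the position of the (unique) corner tile is then the only remaining degree of freedom.

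First I would carry out the local analysis showing that two vertical lines force a genuine \emph{square} between them. Fix a point $x$ with two vertical lines and look at a pair of them having nothing but non-line tiles strictly in between (such a pair exists as soon as there are at least two lines). I would check, by inspection of the colours of $T$ in Figure~\ref{fig:tileset}, that the colour appearing on the inner (right) side of the left line and on the inner (left) side of the right line can only be matched by tiles belonging to horizontal lines at the very bottom and top, and that these horizontal lines must in turn close up against both vertical lines. This realises the two lines as the left and right borders of a region of the form $\llbracket 0,k\rrbracket^2$; the Remark accompanying the definition of a square then applies verbatim, so the inner colours force a \emph{counting signal} and a \emph{diagonal} tile on each position $(i,i)$, \idest the region between the two lines is exactly a square.

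Next I would propagate this square through the two signals. The \emph{increase signal}, which is forced along the right border of the square, makes the next column to the right start with a square whose size is larger by exactly one and makes it carry exactly one more stacked square; iterating yields an infinite chain of squares of strictly increasing size to the right. Running the same reasoning to the left, the increase signal forces the size to \emph{decrease} by one at each step. Since the size of a square is a positive integer, this strictly decreasing chain cannot continue forever: it must terminate, and the only way the tileset allows a leftmost column to terminate is by the appearance of the corner tile (the unique tile that closes a column on its bottom-left without demanding a further column to its left). This produces the corner tile inside $x$, completing the reduction, and by the pre-lemma paragraph $x$ is then a translate of $\alpha$.

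The hard part will be the local step: rigorously ruling out \emph{degenerate} ways two vertical lines could coexist without enclosing a full square --- for instance lines of mismatched vertical extent, horizontal lines that peel off between the two lines instead of closing, or signals that escape to infinity. The counting signal, whose only admissible endings are tiles $30,32$ and $7,21$, is the key device here: it bounds the number of stacked squares by their common size and thereby prevents the structure from drifting, forcing the termination argument on the left to bottom out precisely at the corner tile rather than running off indefinitely. Making these colour-matching exclusions exhaustive, rather than merely plausible from the picture, is the bulk of the work.
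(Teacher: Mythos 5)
Your proposal is correct and follows essentially the same route as the paper's own proof: both establish that two vertical lines must be joined by horizontal lines into a square, then use the counting and increase signals to force the column sizes to decrease strictly going left until a size-one square with the corner tile appears, at which point the pre-lemma analysis pins the whole point down as a translate of $\alpha$. Your explicit acknowledgment that the exhaustive colour-matching exclusions are the real work is fair --- the paper's proof glosses over them in much the same way.
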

\begin{proof}

The idea of the construction is to force that whenever there are two
vertical lines, then the point is a shifted copy of the one in
Figure~\ref{fig:confalpha}.

Suppose that we have a tiling in which two vertical lines appear. They may be ended on their bottom
only by a bottom tile 30 or 32, but when a bottom tile appears, it forces all tiles to its right to
be bottom tiles. Because the color on each side of the vertical lines is not the same they
necessarily are connected by horizontal lines, which must form squares due to the diagonal. Suppose
the two vertical
lines are at distance $k$, then there are exactly $k$ squares between them vertically, because
of the counting signal: it must appear in each square, and is shifted to the right
every time
it crosses a horizontal line, it may be ended in each column only by tiles 32 (or 30 if it is the
leftmost column) in the bottommost square and
by 21 (resp. 7) in the topmost.

The bottommost square must have an increase signal as its top
horizontal line, since the lower left corner 32 (or 30 in case it is the leftmost square) forces the
left side to be formed of a succession of tiles 24 ended by tile 26 (resp. only one tile 31), then
the top left corner is necessarily a tile 25 (resp. 27). This forces the size of the
squares on its left to be $k-1$ and on its right to be $k+1$.

If we focus only on  the bottommost squares, they are of decreasing size when going left, the
last one is of size 1, and necessarily has the corner tile as its lower left corner.
\end{proof}

\begin{figure} 
 \begin{center}
 \includegraphics{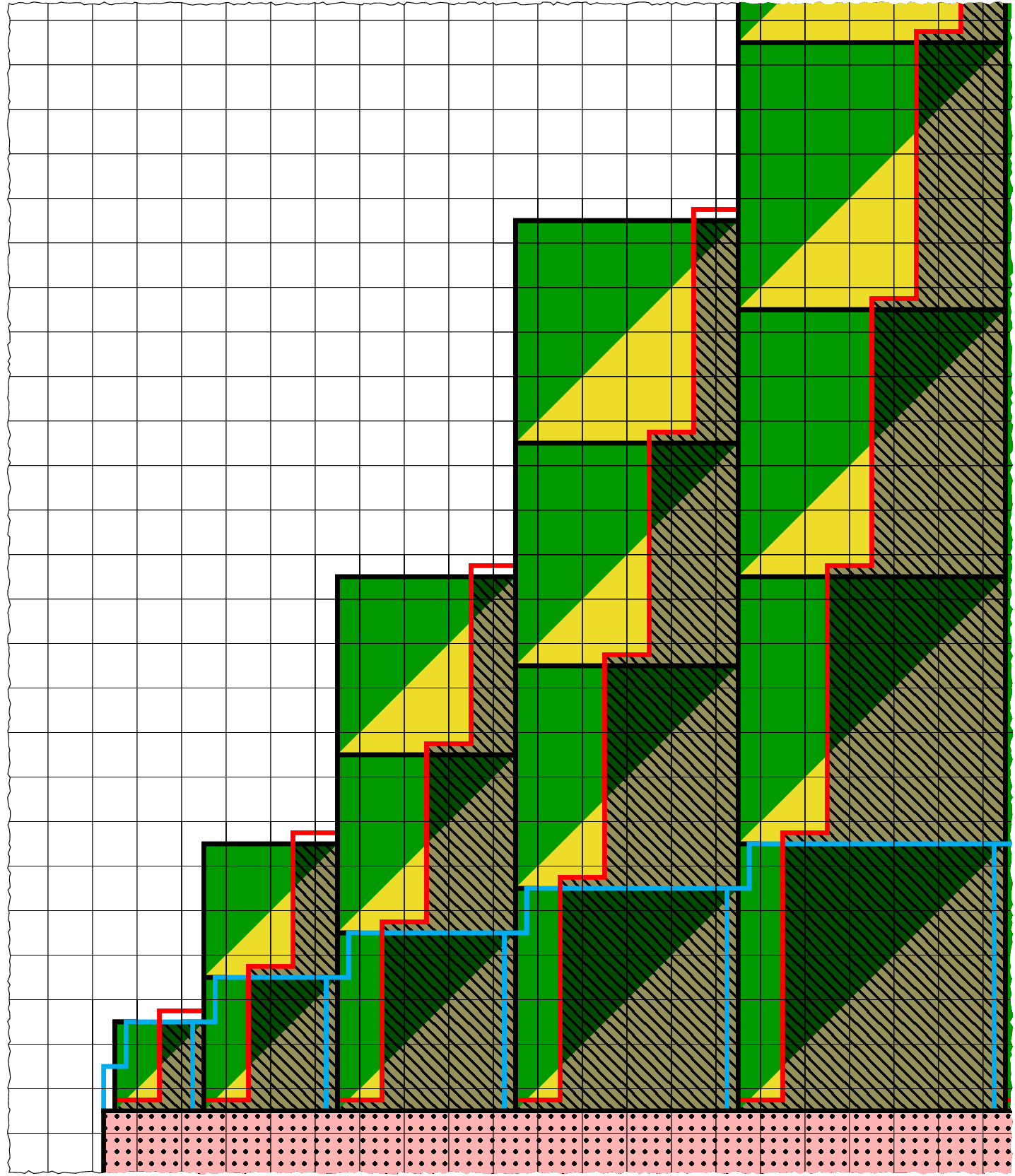}
 \end{center}
 \caption{Tiling $\alpha$: the unique valid tiling of $T$ in which there are 2
or more vertical
 lines. This tiling has Cantor-Bendixson rank 1.}
 \label{fig:confalpha}
\end{figure}

\begin{lemma}\label{lem:countable}
$X_T$ is countable.
\end{lemma}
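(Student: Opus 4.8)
The plan is to count the points of $X_T$ according to how many vertical lines they contain, leaning entirely on Lemma~\ref{lem:twolines} for the only difficult case. Lemma~\ref{lem:twolines} tells us that every point carrying two or more vertical lines is a translate of the single tiling $\alpha$ of Figure~\ref{fig:confalpha}; since $\ZZ^2$ is countable, the orbit $\{\sigma_v(\alpha)\mid v\in\ZZ^2\}$ contributes only countably many points. It therefore remains to show that the points with \emph{at most one} vertical line form a countable set, and the whole remaining difficulty is concentrated in this degenerate regime.

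First I would analyse the black-line content of such a point. Recall that a horizontal line can terminate only on a vertical line (its end must be one of the tiles $5,6,7,25,26,36,37$, each of which carries a vertical black border), and that the colours on the two sides of a vertical line differ, so that a horizontal line issuing from it cannot simply fade into the background. Hence, if a point $c$ has no vertical line at all, every horizontal line occurring in $c$ must be bi-infinite; and if $c$ has exactly one vertical line, say at column $x_0$, then every horizontal line is either bi-infinite or a half-infinite ray abutting $x_0$. In both cases the black-line skeleton of $c$ is, up to the position of the at most one vertical line, controlled by its rows together with the diagonals that the matching rules permit.

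The key step is then to rule out the uncountable families that could a priori hide among these skeletons, and this is where the counting and increase signals do the work. A counting signal is a connected path that can be capped only by the tiles $30,32$ at the bottom and $7,21$ at the top, and the increase signal obeys an analogous connectivity constraint; consequently a stack of parallel horizontal lines can persist only if it is closed off, both at top and at bottom, exactly as in a column of squares of $\alpha$. But closing a column of squares requires a vertical line on each of its two sides, which is forbidden once at most one vertical line is present. I would therefore argue that a point with at most one vertical line contains only finitely many horizontal lines and diagonals linked through signals, the remainder of the plane being forced to be one of the finitely many admissible uniform backgrounds. Such a point is then determined by a finite amount of data — the position of the possible vertical line, together with the finitely many positions of the remaining isolated features — so there are only countably many of them.

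I expect the main obstacle to be precisely this rigidity statement: proving that the signals cannot sustain an infinite, freely chosen stack of horizontal lines (which would yield an uncountable family indexed by the subsets of $\ZZ$ specifying the chosen heights) without forcing the doubly bounded square structure of $\alpha$, and hence two vertical lines. Once this is established, combining it with Lemma~\ref{lem:twolines} exhibits $X_T$ as a countable union of countable sets, so $X_T$ is countable.
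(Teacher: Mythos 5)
Your proposal follows essentially the same route as the paper: split off the case of $\geq 2$ vertical lines via Lemma~\ref{lem:twolines}, then argue that a point with at most one vertical line is determined by finitely many integer parameters because any further horizontal lines would force a square and hence a second vertical line. The paper makes your ``rigidity statement'' precise in exactly this way --- at most one horizontal line on each side of the (at most one) vertical line, and at most one horizontal line when there is no vertical line, plus a counting signal whose position is a free countable parameter --- so the step you flag as the main obstacle is indeed the crux, and it is settled by the same square-forcing mechanism you sketch.
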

\begin{proof}
  Lemma~\ref{lem:twolines} states that there is one point, up to shift, that has
two or more vertical lines. This means that the other points have at most one
such line.
  \begin{itemize}
    \item If a point has exactly one vertical line, then it can have at most
      two horizontal lines: one on the left of the vertical one and one on the
right. Otherwise a square would appear and the configuration would be $\alpha$.
      A counting signal may then appear on the left or the right of the vertical line
arbitrarily far from it. There is a countable number of such points.

    \item If a point has no vertical line, then it has at most one horizontal
line. A counting signal can then appear only once. There is a finite number of
such points, up to shift. 
  \end{itemize}
  There is a countable number of points that can be obtained with the tileset
$T$.
  All types of obtainable points are shown in Figures~\ref{fig:confalpha}
  and~\ref{fig:otherconf}.
  
\end{proof}

\begin{figure}
  \begin{center}
  \includegraphics{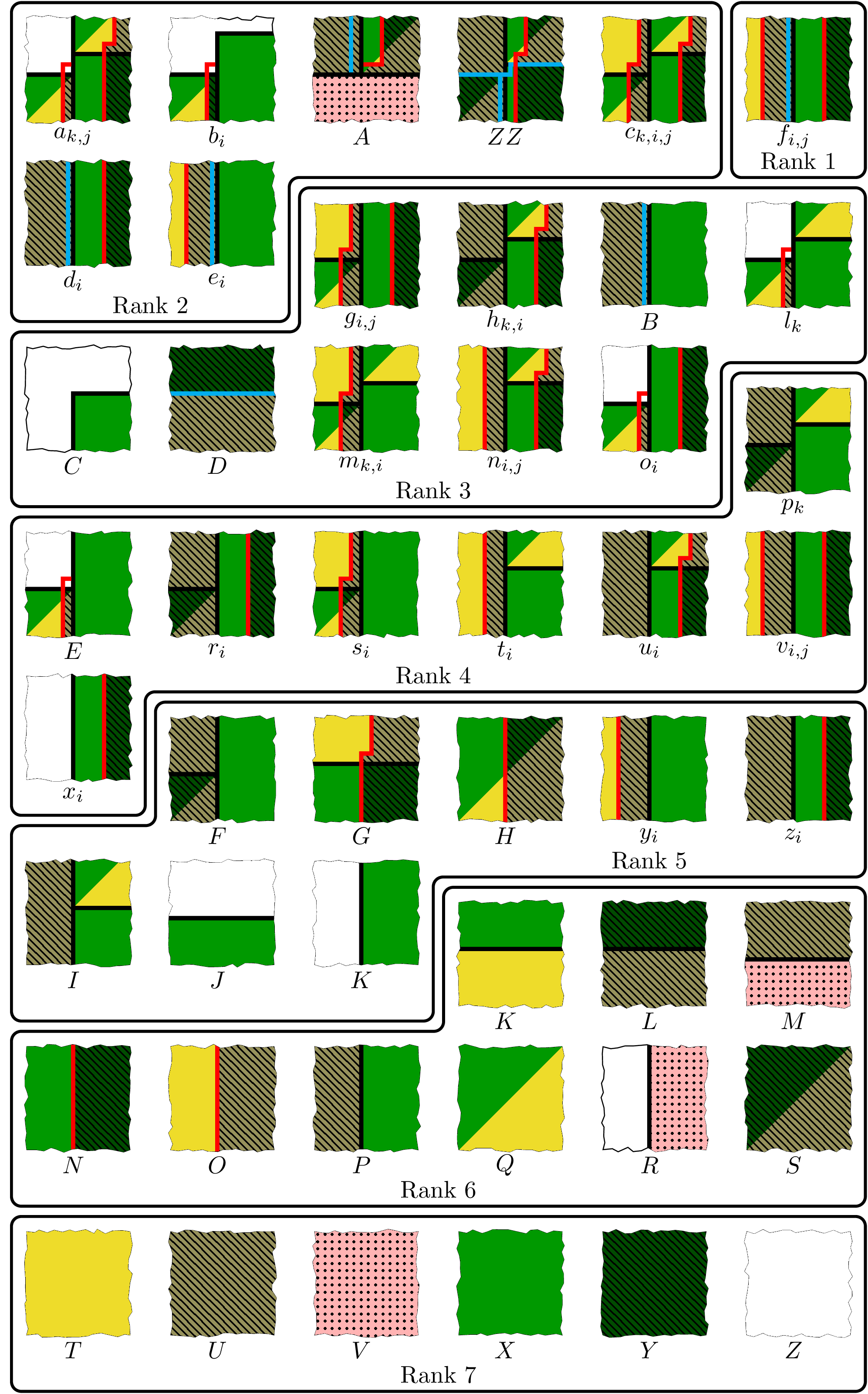}
  \end{center}
  \caption{The configurations of $X_T\setminus\{\alpha\}$: the $A-ZZ$ configurations are unique (up
to
shift), and the configurations with subscripts $i,j\in\NN,k\in\ZZ^2$ represent
the fact that distances between some of the lines (red, horizontal, vertical)
  can vary. The configurations are classified according to their Cantor-Bendixson rank. Note that
configuration $ZZ$ cannot have a counting signal on its left,
because it would force another vertical line.}
  \label{fig:otherconf}
\end{figure}

By taking our tileset $T=\{1,\dots,40\}$ and mirroring all the tiles along the
south-west/north-east diagonal,
we obtain a tileset $T'=\{1',\dots,40'\}$ with the exact same properties, except
it enforces the skeleton of
Figure~\ref{fig:layers}b. Remember that whenever the corner tile appeared in a
point, then necessarily
this point was a shifted of $\alpha$. Analogously, the corner tile of $T'$
appearing in a point means that this point is a shifted of $\alpha'$.
We hence construct a third tileset $\tau =
\left(T\setminus\left\{30\right\}\times
T'\setminus\left\{30'\right\}\right)\cup\left\{(30,30')\right\}$ which is the
superimposition of $T$ and $T'$ with the restriction that tiles $30$ and $30'$
are necessarily superimposed to each other. The corner
tile $(30,30')$ of $\tau$ has the property that whenever it appears, the tiling
is the superimposition of the skeletons of Figures~\ref{fig:layers}a
and~\ref{fig:layers}b with the corner tiles at the same
place: there is only one such tiling up to shift, we call it $\beta$. 

The skeleton of Figure~\ref{fig:layers}c is obtained from $\beta$ if we forget
about the
parts of the lines of the $T$ layer (resp. $T'$) that are superimposed to white
tiles, 29' (resp. 29), of $T'$ (resp. $T$).

As a consequence of Lemma~\ref{lem:countable}, $X_\tau$ is also countable. And
as a consequence of Lemma~\ref{lem:twolines}, the only points in $x_\tau$ in
which computation can be embedded are the shifts of $\beta$. The shape of
$\beta$ is the one of Figure~\ref{fig:layers}c, the coordinates of the points of
the grid are the following (supposing tile $(30,30')$ is at the center of the
grid):

 $$\left\{(f(n),f(m))\mid f(m)/4\leq f(n)\leq 4f(m)\right\}$$
 $$\left\{(f(n),f(m))\mid m/2\leq n \leq 2m\right\}$$

where $f(n)=(n+1)(n+2)/2-1$.

\begin{lemma}\label{lem:rankT}
  The Cantor-Bendixson rank of $X_\tau$ is less than or equal to 13.
\end{lemma}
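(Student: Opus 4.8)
The plan is to reduce the rank of the two-layer subshift $X_\tau$ to that of the single-layer subshift $X_T$, by exploiting the fact that $X_\tau$ sits inside the product $X_T\times X_{T'}$. First I would observe that forgetting the superimposition constraint gives a recursive homeomorphism from $X_\tau$ onto a closed subset of the product subshift $X_T\times X_{T'}$: every $\tau$-tiling projects to a pair consisting of a valid $T$-tiling and a valid $T'$-tiling, this map is injective and computable both ways, and its image is compact hence closed. Since the Cantor--Bendixson derivative is monotone under closed subsets --- if $S\subseteq S'$ then a point isolated in $S'$ is a fortiori isolated in $S$, so $D(S)\subseteq D(S')$ and by induction $S^{(\lambda)}\subseteq S'^{(\lambda)}$ --- we obtain $CB(X_\tau)\leq CB(X_T\times X_{T'})$.

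The next step is a product formula for the rank. A pair $(x,y)$ is isolated in $X_T\times X_{T'}$ exactly when both $x$ and $y$ are isolated, so $D(X_T\times X_{T'})=\left(D(X_T)\times X_{T'}\right)\cup\left(X_T\times D(X_{T'})\right)$, and an easy induction yields $\left(X_T\times X_{T'}\right)^{(n)}=\bigcup_{i+j=n}X_T^{(i)}\times X_{T'}^{(j)}$. Consequently the rank of $(x,y)$ equals $\mathrm{rank}(x)+\mathrm{rank}(y)-1$, and therefore $CB(X_T\times X_{T'})=CB(X_T)+CB(X_{T'})-1$. Because $T'$ is obtained from $T$ by mirroring along a diagonal, $X_{T'}$ is homeomorphic to $X_T$ and has the same rank, so this bound reads $CB(X_\tau)\leq 2\,CB(X_T)-1$.

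It then remains to show $CB(X_T)\leq 7$. Here I would invoke Lemma~\ref{lem:countable}, which already enumerates all configurations of $X_T$: the tiling $\alpha$ of Figure~\ref{fig:confalpha} and the families depicted in Figure~\ref{fig:otherconf}. Since $X_T$ is countable, its rank is the supremum of the ranks of its points, so it suffices to check that every configuration listed there has rank at most $7$. The intuition is that the rank grows with the number of independent degrees of freedom of a configuration: each parameter that can vary (the distance of an isolated line, or of a counting signal, to the rest of the configuration, encoded by the subscripts $i,j\in\NN$) contributes one level to the hierarchy, a configuration of rank $r$ being a limit of configurations of rank $r-1$ obtained by sending one such parameter to infinity. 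Reading off Figure~\ref{fig:otherconf} and bounding the maximal number of simultaneous free parameters gives $CB(X_T)\leq 7$, whence $CB(X_\tau)\leq 2\cdot 7-1=13$.

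The main obstacle is this last step, namely the explicit rank stratification of $X_T$. One must verify, for each family of configurations in Figure~\ref{fig:otherconf}, exactly which other configurations it is a limit of, and that no chain of successive derivatives is longer than seven; in particular that the shifts of $\alpha$, being rigidly determined by the corner tile (Lemma~\ref{lem:twolines}), are isolated and contribute only rank $1$, and that combining a free horizontal parameter with a free vertical one and with the admissible positions of the counting signals never stacks more than seven levels. By contrast, the monotonicity step and the product-formula step are routine once this stratification is in hand.
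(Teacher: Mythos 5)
Your proposal is correct and follows essentially the same route as the paper: view $X_\tau$ inside the product $X_T\times X_{T'}$, apply a product formula for the Cantor--Bendixson rank, and read the rank of $X_T$ off the explicit classification of its (countably many) configurations in Figure~\ref{fig:otherconf}. The only differences are cosmetic: the paper first peels off one derivative (the isolated shifts of $\beta$) and bounds $(X_\tau)'$ inside $(X_T\setminus\{\alpha\})\times(X_{T'}\setminus\{\alpha'\})$ of rank $6+6$, whereas you bound the whole product using the (in fact sharper and more accurate) formula $CB(X_T)+CB(X_{T'})-1$; both reduce to the same unavoidable case check of the rank stratification in Figure~\ref{fig:otherconf}, which the paper likewise delegates to the figure.
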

\begin{proof}
It is clear that $(X_\tau)' \subset (X_T - \{ \alpha\} \times X_{T'} - \{
  \alpha'\})$. $X_T - \{\alpha\}$ is of rank $6$ as depicted in Figure~\ref{fig:otherconf}. 
The rank of a product being the sum of the rank (when it is finite), 
$(X_\tau)'$ is at most of rank $12$.
\end{proof}

\section{\pizu classes with computable members and SFTs}\label{S:pizuwithrec}

The SFT constructed before will allow us to prove a series of theorems on \pizu
classes with computable points. The foundation of these is
Theorem~\ref{thm:pizuhomtiling} which establishes a recursive homeomorphism
between SFTs and \pizu classes, up to a computable subset of the SFT. This
recursive homeomorphism is the best we can hope for, as will be shown in
section~\ref{S:pizuwithoutrec}. Then from this ``partial'' homeomorphism, we
will be able to deduce results on the set of Turing degrees of SFTs and \pizu
classes.

\begin{theorem}\label{thm:pizuhomtiling}
  For any \pizu class $S$ of $\cantor$ there exists a tileset
$\tau_S$ such
  that $S\times\ZZ^2$ is recursively homeomorphic to
$X_{\tau_S}\setminus O$ where $O$
  is a computable set of configurations.
\end{theorem}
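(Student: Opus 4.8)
The plan is to combine the Turing-machine encoding behind the first theorem of Section~\ref{S:hanf} with the sparse grid tileset $\tau$ of Section~\ref{S:tileset}. First I would take the machine $M$ furnished by Hanf's construction for $S$, so that $M$ with oracle $x$ halts if and only if $x\notin S$, together with its encoding as Wang tiles (Figure~\ref{fig:tmencoding}). I would then define $\tau_S$ as the superimposition of $\tau$ with a copy of this computation encoding, placing the computation cells precisely at the grid intersections of $\beta$ and letting time increase towards the north-east while the tape runs along the north-west/south-east diagonals, exactly as sketched after Figure~\ref{fig:layers}. The oracle bits are carried along the tape as second symbols, so that each fully gridded configuration reads off a unique $x\in\cantor$.

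The crucial structural input is the consequence of Lemma~\ref{lem:twolines} recorded in Section~\ref{S:tileset}: the only points of $X_\tau$ in which a computation can be embedded are the shifts of $\beta$, every other point having at most one intersection of a black vertical and a black horizontal line, hence room for at most one isolated computation cell and no genuine computation step. Consequently the only points of $X_{\tau_S}$ carrying a genuine run of $M$ are the shifts of $\beta$ decorated with a computation, and such a decorated configuration is a valid bi-infinite tiling exactly when $M$ does not halt on the encoded oracle, \idest when $x\in S$. Since $\beta$ is aperiodic its shift orbit is a free $\ZZ^2$-orbit, so the set of these \emph{good} points is in bijection with $S\times\ZZ^2$, the second factor recording the position of the corner tile $(30,30')$.

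I would then let $O$ be the complement of the good points in $X_{\tau_S}$. This set consists of the non-$\beta$ points — shifts of the finitely many templates of Figure~\ref{fig:otherconf}, each decorated with at most one inert cell — together with the limits of shifts of $\beta$ as the corner escapes to infinity; all of these are determined by finitely many discrete parameters and hence are computable configurations, and $O$ is closed, so $X_{\tau_S}\setminus O$ is exactly the open set of good points. The recursive homeomorphism $S\times\ZZ^2\to X_{\tau_S}\setminus O$ then sends $(x,v)$ to the copy of $\beta$ with corner at $v$ filled in by the run of $M$ on $x$: each tile is produced on demand by computing the grid coordinates from the formula $f(n)=(n+1)(n+2)/2-1$ and simulating $M$ on $x$ up to the relevant cell. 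Its inverse locates the corner tile to recover $v$ and reads the oracle off the tape to recover $x$; both directions are plainly computable on good points.

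The main obstacle I expect is not the degree-theoretic bookkeeping but the fidelity of the superimposition: one must check that the increasing spacing of the grid always leaves enough tape for the next computation step (the point of the footnote in Section~\ref{S:tileset}), that the matching rules of $\tau_S$ genuinely forbid any spurious computation on the single-intersection points, and that every configuration in the shift-orbit closure of a decorated $\beta$, other than the decorated shifts themselves, is computable and so safely absorbed into $O$. Verifying that $O$ is precisely the computable complement — so that $X_{\tau_S}\setminus O$ is exactly $S\times\ZZ^2$ and nothing more — is where the care lies.
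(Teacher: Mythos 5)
Your proposal follows essentially the same route as the paper's own proof: superimpose the oracle Turing-machine encoding on the sparse-grid tileset $\tau$, use the fact that only shifts of $\beta$ can carry a computation, let $O$ be the remaining (countable, computable) points, and map $(x,z)$ to the decorated $\beta$ with corner at $z$ and oracle $x$. Your additional attention to the limit points of the $\beta$-orbit and to the tape-space issue is a correct elaboration of details the paper leaves implicit.
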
 
\begin{proof}
  This proof uses the construction of section~\ref{S:tileset}.
  Let $M$ be a Turing machine such that $M$ halts with $x$ as an oracle iff
$x\not\in S$.
  Take the tileset $\tau$ of section~\ref{S:tileset} and encode, as explained
earlier, in configuration $\beta$ the Turing
machine $M$ having as an oracle $x$ on an unmodifiable second tape.
  This defines a new tiling system $\tau_M$, and we define $O$ as the set of all points
 which were not constructed from a shift of $\beta$. To
each $(x,z)\in S\times\ZZ^2$ we associate the $\beta$ tiling having a corner
at position $z$
and having $x$ on its oracle tape. $O$ is computable, because it contains a
countable number (Lemma~\ref{lem:countable}) of computable points (none of
these points can contain more than one step of computation). 
\end{proof}

\begin{corollary}\label{corr:turingdegree:withrec}
 For any \pizu class $S$ of $\cantor$ with a computable member, there exists a
SFT $X$ with the same set of Turing degrees.
\end{corollary}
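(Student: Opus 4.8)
The plan is to combine the ``partial'' recursive homeomorphism of Theorem~\ref{thm:pizuhomtiling} with the hypothesis that $S$ contains a computable member. Theorem~\ref{thm:pizuhomtiling} gives a tileset $\tau_S$ and a computable set $O$ of configurations such that $S\times\ZZ^2$ is recursively homeomorphic to $X_{\tau_S}\setminus O$. Setting $X=X_{\tau_S}$, the first observation is that taking the product with $\ZZ^2$ does not change the set of Turing degrees: each $(x,z)\in S\times\ZZ^2$ has the same degree as $x$, since $z$ is a computable piece of data (an element of $\ZZ^2$), and conversely every degree of $S$ is realized. Hence the set of Turing degrees of $S\times\ZZ^2$ equals the set of Turing degrees of $S$.

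Next I would use that recursive homeomorphism preserves Turing degrees. If $f:S\times\ZZ^2\to X\setminus O$ is a bijection computed by a Turing machine whose inverse is also computed by a Turing machine, then for every point $p$, $f(p)$ is computable from $p$ and $p$ is computable from $f(p)$, so $f(p)\turequiv p$. Therefore the set of Turing degrees of $X\setminus O$ coincides with the set of Turing degrees of $S\times\ZZ^2$, which by the previous paragraph equals the set of Turing degrees of $S$.

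It remains to account for the removed set $O$, and this is where the computable-member hypothesis is used. Since $O$ is a computable set of configurations (Lemma~\ref{lem:countable} and the proof of Theorem~\ref{thm:pizuhomtiling} show $O$ consists of countably many computable points), every point of $O$ has the degree $\mathbf{0}$ of computable points. Thus the set of Turing degrees of the full SFT $X=X_{\tau_S}$ is the set of degrees of $X\setminus O$ together with the single degree $\mathbf{0}$. So in general $X$ has exactly the degrees of $S$ plus the degree of computable points. The only gap is closed precisely by the assumption that $S$ has a computable member: then $\mathbf{0}$ already belongs to the degree set of $S$, so adjoining the degree of $O$ adds nothing new, and the degree sets of $X$ and $S$ agree exactly.

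The main obstacle, and the step deserving the most care, is verifying that the degree of $O$ contributes nothing beyond what $S$ already contributes. The construction must genuinely guarantee that every point of $O$ is computable—this is the content of the parenthetical remark in Theorem~\ref{thm:pizuhomtiling} that none of these points can contain more than one step of computation, so no oracle is ever read and the points are the purely combinatorial configurations enumerated in Figure~\ref{fig:otherconf}. Once that is in hand, the equality of degree sets follows by the bookkeeping above; the computable-member hypothesis is exactly what makes the extra degree $\mathbf{0}$ harmless.
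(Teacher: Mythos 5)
Your proposal is correct and follows exactly the argument the paper intends: the paper leaves this corollary's proof implicit, but its proof of Corollary~\ref{corr:turingdegree:countable} uses precisely your reasoning (the degrees of $X_{\tau_S}\setminus O$ match those of $S$ via the recursive homeomorphism of Theorem~\ref{thm:pizuhomtiling}, the removed set $O$ contributes only the degree $\mathbf{0}$ since all its points are computable, and the computable-member hypothesis absorbs that extra degree). No gaps; your emphasis on verifying that every point of $O$ is computable is exactly the right place to put the care.
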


\begin{corollary}\label{corr:turingdegree:countable}
 For any countable \pizu class $S$ of $\cantor$, there exists
a SFT $X$ with the same set of Turing degrees.
\end{corollary}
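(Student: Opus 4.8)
The plan is to reduce this statement entirely to Corollary~\ref{corr:turingdegree:withrec} by invoking the classical basis theorem recalled in the preliminaries: \emph{every countable \pizu class has a computable member}. Since a countable subshift even has a periodic member, this is automatic in our setting, but the general fact for \pizu classes (see Cenzer/Remmel) already suffices. So, given a countable \pizu class $S$, it contains a computable point $x_0$, and therefore $S$ satisfies the hypothesis of Corollary~\ref{corr:turingdegree:withrec}.

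First I would fix such a computable member $x_0 \in S$, so that the computable degree $\turdeg x_0 = \mathbf{0}$ belongs to the set of Turing degrees of $S$. Then I would simply apply Corollary~\ref{corr:turingdegree:withrec}: because $S$ has a computable member, it yields a SFT $X$ whose set of Turing degrees coincides exactly with that of $S$, which is the desired conclusion.

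It is worth recalling \emph{why} the computable-member hypothesis is exactly what is needed, since that is the single place the countability assumption is used. By Theorem~\ref{thm:pizuhomtiling}, $S \times \ZZ^2$ is recursively homeomorphic to $X_{\tau_S}\setminus O$ with $O$ computable. A recursive homeomorphism preserves Turing degrees, and the $\ZZ^2$ factor adds no degree beyond that of each $x \in S$, so the degree set of $X_{\tau_S}\setminus O$ equals that of $S$. Restoring the points of $O$, which are all computable, can only add the degree $\mathbf{0}$; and $\mathbf{0}$ is already present in the degree set of $S$ precisely because $x_0$ is computable. Hence the two degree sets coincide, and taking $X = X_{\tau_S}$ finishes the argument.

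There is essentially no obstacle to overcome at this stage: all the real work has been carried out in Theorem~\ref{thm:pizuhomtiling} and Corollary~\ref{corr:turingdegree:withrec}. The only additional ingredient is the standard basis theorem guaranteeing a computable member for countable \pizu classes, which ensures the hypothesis of the preceding corollary is met and thus that the extra degree $\mathbf{0}$ contributed by $O$ is absorbed.
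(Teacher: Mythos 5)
Your proposal is correct and follows essentially the same route as the paper: invoke the basis theorem that every countable \pizu class has a computable member, and then conclude via the construction of Theorem~\ref{thm:pizuhomtiling} (equivalently, Corollary~\ref{corr:turingdegree:withrec}) since the computable points of $O$ contribute only the degree $\mathbf{0}$, which is already present in the degree set of $S$. Your added explanation of why the computable-member hypothesis absorbs the extra degree is exactly the implicit content of the paper's argument.
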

\begin{proof}
 We know, from Cenzer/Remmel~\cite{CenzerRec}, that countable \pizu
 classes have
\textbf{0} (computable elements) in their set of Turing degrees, thus the
SFT $X_{\tau_M}$ described in the proof of Theorem~\ref{thm:pizuhomtiling} has
exactly the same set of Turing degrees as $S$.
\end{proof}

\begin{theorem}\label{thm:tilerankpizu}

For any countable \pizu class $S$ of $\cantor$ there exists
a SFT $X$ with the same set of Turing degrees such that $CB(X) =
CB(S)+c$ for some constant $c \leq 13$.
\end{theorem}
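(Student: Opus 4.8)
The plan is to reuse the very SFT produced by Theorem~\ref{thm:pizuhomtiling} and to read off its Cantor--Bendixson rank from the way it decomposes. Fix a Turing machine $M$ witnessing that $S$ is \pizu, let $X=X_{\tau_M}$ be the tileset built in the proof of Theorem~\ref{thm:pizuhomtiling}, and let $O$ be the associated computable set of configurations, so that $B:=X\setminus O$ is recursively homeomorphic to $S\times\ZZ^2$. The equality of Turing degrees requires nothing new: since $S$ is countable it contains a computable member, so Corollary~\ref{corr:turingdegree:countable} already gives that $X$ has exactly the Turing degrees of $S$. All the work is therefore in computing $CB(X)$.

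First I would record the shape of the two pieces. The set $B$ consists of the configurations carrying a corner tile $(30,30')$, hence a full copy of the grid $\beta$; it is a union over $z\in\ZZ^2$ of the clopen sets ``the corner sits at $z$'', so $B$ is open and $O$ is closed and shift-invariant. Because every $\beta$-based configuration is aperiodic, its orbit is a discrete copy of $\ZZ^2$, and the recursive homeomorphism of Theorem~\ref{thm:pizuhomtiling} identifies $B$ with $S\times\ZZ^2$ where the $\ZZ^2$ factor is discrete. Consequently $B^{(\lambda)}=S^{(\lambda)}\times\ZZ^2$ for every $\lambda$, and since $S$ is countable we have $B^{(CB(S))}=\emptyset$ while $B^{(\lambda)}\neq\emptyset$ for all $\lambda<CB(S)$; in particular $CB(S\times\ZZ^2)=CB(S)$.

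The crux is that the derivative of $X$ respects this splitting: I claim $X^{(\lambda)}\cap B=B^{(\lambda)}$ for every $\lambda$. Indeed any $\beta$-based configuration is pinned by a finite pattern containing the corner tile together with a prefix of its oracle tape, and no such pattern ever occurs in an $O$-configuration; thus a point of $B$ is isolated in $X^{(\lambda)}$ exactly when it is isolated among the surviving $B$-points, regardless of which $O$-points remain (deleting $O$-points can only help to isolate). A transfinite induction then yields the displayed equality, whence $X^{(\lambda)}\neq\emptyset$ for all $\lambda<CB(S)$ and $X^{(CB(S))}\cap B=\emptyset$. This gives both bounds at once. The lower bound is immediate: $CB(X)\ge CB(S)$. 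For the upper bound, $X^{(CB(S))}\subseteq O$, and as $O$ is closed this derived set stays inside $O$ forever; the configurations surviving to this stage are degenerate limits of $\beta$-based points whose corner has escaped to infinity, so, carrying no genuine computation, they embed (after forgetting the oracle layer) into $X_\tau$ as a closed subset. By Lemma~\ref{lem:rankT} their rank is at most $13$, so $CB(X)\le CB(S)+13$. Since every ordinal between $CB(S)$ and $CB(S)+13$ is of the form $CB(S)+c$ with $0\le c\le 13$, we conclude $CB(X)=CB(S)+c$.

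The main obstacle is precisely the pair of claims $X^{(\lambda)}\cap B=B^{(\lambda)}$ and the bound on the residual rank, both of which hinge on controlling how the oracle layer interacts with the Cantor--Bendixson derivative. One must check carefully, using Lemma~\ref{lem:countable}, that the oracle data visible on a degenerate configuration is always bounded (there is no computation there), so that it neither creates extra limit points inside $O$ nor prevents an $O$-point from being approximated by $\beta$-configurations with an arbitrary oracle. This is exactly what forces the residual rank to come from the bare grid rather than from $S$, and hence to be bounded by $13$ uniformly in $S$.
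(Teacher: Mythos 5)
Your proof is correct and follows essentially the same route as the paper: take $X=X_{\tau_M}$ from Theorem~\ref{thm:pizuhomtiling}, get the Turing degrees from the countable case, and obtain the rank by adding $CB(S)$ (carried by the oracle tape on the $\beta$-part) to the rank $\leq 13$ of the residual computable part controlled by Lemma~\ref{lem:rankT}. In fact you supply considerably more detail than the paper's two-sentence argument, in particular the transfinite induction showing $X^{(\lambda)}\cap B=B^{(\lambda)}$ via corner-tile cylinders, which is exactly the justification the paper leaves implicit.
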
 
This theorem holds when $CB(S)$ is any ordinal, finite or infinite.
\begin{proof}
 Lemma~\ref{lem:rankT} states that $X_{\tau}$ is of Cantor-Bendixson
rank $c \leq 13$
In the tileset $\tau_M$ of the previous proof, the
Cantor-Bendixson rank of the contents of the tape is exactly $CB(S)$, hence 
$CB(X_{\tau_S})=CB(S)+c$.
\end{proof}

From Ballier/Durand/Jeandel~\cite{BDJSTACS} we know that for any
subshift $X$, if $CB(X)\leq 2$, then $X$ has only computable points.
Thus an optimal construction would have to augment the Cantor-Bendixson rank by
at least 2.

\begin{corollary}\label{cor:ranksofic}
  For any countable \pizu class $S$ of $\cantor$ there exists
a sofic subshift $X$ with the same set of Turing degrees such that $CB(X) = CB(S)+2$.
\end{corollary}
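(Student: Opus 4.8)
The plan is to realize $X$ as a \emph{factor} (letter-to-letter projection) of the SFT $X_{\tau_S}$ produced in Theorem~\ref{thm:tilerankpizu}; being a factor of an SFT, $X$ is sofic by definition. The projection $\pi$ I would use erases the grid skeleton of Figure~\ref{fig:layers} --- the black lines, the increase and counting signals, and the diagonal tiles --- mapping all of them onto a single blank symbol $\#$, while it keeps visible the \emph{contents} of the computation: the state of $M$, the working tape, and the unmodifiable oracle tape carrying $x$, together with the image of the corner tile $(30,30')$ retained as a distinguished anchor symbol. The reason to keep the anchor is that, by Lemma~\ref{lem:twolines}, the corner tile occurs only in shifts of $\beta$, so its projected image occurs only in the images of the fully formed grids; this lets one recover, recursively, the position $z\in\ZZ^2$ of the corner and then read off $x$ from the oracle cells, which sit at the computable grid coordinates $f(n),f(m)$. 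Since $X_{\tau_S}$ is countable (Lemma~\ref{lem:countable}) its sofic image is countable, so $CB(X)$ is a well-defined ordinal with empty top derivative.

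First I would check that $X$ has the same set of Turing degrees as $S$. On the images of the shifts of $\beta$, the map $(x,z)\mapsto\pi(\beta_{x,z})$ is a recursive homeomorphism onto its range, exactly as in the proof of Theorem~\ref{thm:pizuhomtiling}: from the anchor one locates the grid and reads $x$, and conversely $x$ and $z$ determine the projected configuration. Hence these points contribute precisely the Turing degrees of $S\times\ZZ^2$, that is, those of $S$ together with $\textbf{0}$. Every remaining point of $X$ is the image of a degenerate configuration of $X_{\tau_S}$; by the sparse-grid property each such configuration carries at most one computation cell and no anchor, so its image is blank except for at most one symbol and is therefore computable. Thus the degrees of $X$ are exactly those of $S$ augmented by $\textbf{0}$, which coincide with those of $S$ when $S$ is countable, since a countable \pizu class already contains $\textbf{0}$ by Cenzer/Remmel~\cite{CenzerRec}, as in Corollary~\ref{corr:turingdegree:countable}.

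The heart of the argument is the rank computation $CB(X)=CB(S)+2$. The point of passing to the sofic image is that the $\le 13$ levels of grid overhead counted in Lemma~\ref{lem:rankT} become invisible: all the degenerate configurations of Figure~\ref{fig:otherconf}, which are distinguished inside $X_{\tau_S}$ only by the fine structure of their lines and signals, are identified by $\pi$ with blank-or-almost-blank configurations, so they no longer stratify into many derivative levels. For the upper bound I would track how the images $\pi(\beta_{x,z})$ degenerate under the Cantor--Bendixson derivative: letting the anchor escape to infinity along each of the two diagonal directions of the sparse grid of Figure~\ref{fig:layers}c removes exactly two further levels beyond the $CB(S)$ levels already carried by the oracle content, giving $CB(X)\le CB(S)+2$; the matching lower bound $CB(X)\ge CB(S)+2$ then follows from a direct analysis of these same two diagonal degenerations, so that the constant is pinned at exactly $2$. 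This $+2$ is optimal in view of Ballier/Durand/Jeandel~\cite{BDJSTACS}, quoted just before the statement: any subshift of rank at most $2$ has only computable points.

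The step I expect to be the main obstacle is the exact bookkeeping in this rank computation --- in particular, verifying that projecting the skeleton away collapses the overhead to precisely two levels and not one or three. This demands a careful description of the limit configurations of the images $\pi(\beta_{x,z})$ as $z\to\infty$ and of how the two independent diagonal degenerations interact, so as to match the sofic upper bound with the lower bound exactly. Designing $\pi$ so that $x$ remains recursively recoverable --- hence that degrees are genuinely preserved --- while still erasing enough of the skeleton to force the collapse is the delicate part; everything else is a direct adaptation of the proofs of Theorem~\ref{thm:pizuhomtiling} and Theorem~\ref{thm:tilerankpizu}.
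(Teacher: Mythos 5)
Your overall strategy is the paper's: realize $X$ as a letter-to-letter projection of the SFT from Theorem~\ref{thm:pizuhomtiling} that erases the skeleton and keeps the Turing machine tape contents, observe that the degenerate configurations project to (almost) blank points so the set of Turing degrees is that of $S$ together with $\textbf{0}$, and then count the extra Cantor--Bendixson levels. (The paper's projection does not retain an anchor for the corner tile; this is unnecessary because the non-blank cells sit at the rigid sparse-grid positions $(f(n),f(m))$, which already determine where the corner is. Keeping it does no harm.)

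The gap is precisely the step you yourself flag as ``the main obstacle'': you assert that the overhead collapses to exactly two levels but never compute it, and the mechanism you offer --- ``letting the anchor escape to infinity along each of the two diagonal directions'' --- is not where the $+2$ comes from. The correct count is obtained by enumerating the configurations of the image, as the paper does: (i) sparse-grid configurations carrying the oracle content, which stratify into $CB(S)$ derivative levels; (ii) configurations that are blank except for a single symbol --- these are both the images of the degenerate points of $X_{\tau_S}$ and the limits of configurations of type (i) as the corner escapes to infinity, because the spacing between grid intersections is unbounded, so in the limit any finite window contains at most one non-blank cell; (iii) the all-blank configuration, which is the limit of type (ii) as the lone symbol escapes to infinity. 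The $+2$ is the two successive derivative levels contributed by (ii) and then (iii) sitting above the $CB(S)$ levels of (i) --- one level per configuration type, not one per diagonal direction. Without this enumeration the claim that the constant is exactly $2$ (rather than $1$ or $3$) is unsupported, so the proof as written is incomplete at its central point.
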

\begin{proof}
 Take a projection that just keeps the symbols of the Turing machine tape in the SFT
$\tau_M$  from the proof of Theorem~\ref{thm:pizuhomtiling} and maps everything
else to a blank symbol. Recall the Turing machine tape cells are the
intersections of the vertical lines and horizontal lines. This projection leads
to 3 possible configurations (up to shift):
 \begin{itemize}
  \item A configuration with a white background and points corresponding to the
intersections in the sparse grid of Figure~\ref{fig:layers}c. This is
an isolated point, of rank $1$
  \item A completely blank configuration with only one symbol
	somewhere. This configuration is isolated once we remove the
	previous one(s), hence of rank $2$.
  \item A completely blank configuration, of rank $3$.
 \end{itemize}
\end{proof}

Note that a similar theorem in dimension one for effective rather
  than sofic subshifts is proved in
  Cenzer/Dashti/Toska/Wyman~\cite[Theorem 4.6]{CenzerTocs}.

\section{\pizu classes without computable members and
subshifts}\label{S:pizuwithoutrec}

In this section we prove that two-dimensional SFTs containing only
non-computable points have the property that they always have points with
different but comparable degrees, this is Corollary~\ref{cor:2dcompdiffdegrees}.
But we first prove this result for one-dimensional subshifts, not necessarily
of finite type, in Theorem~\ref{thm:1dcompdiffdegrees}, the proof
for two-dimensional SFTs needing only a bit more work. 

One interest of these proofs, lies in
the following theorem, proved by Jockusch and Soare:

\begin{theorem}[Jockusch,
Soare]\label{thm:JockuschSoare}
 There exist \pizu classes containing no computable member, such that any two
different members are Turing-incomparable.
\end{theorem}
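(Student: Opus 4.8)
The plan is to build the desired class directly as the set of infinite branches through a computable tree, following the forcing-with-trees method of Jockusch and Soare. Concretely, I would construct a computable, perfect tree $T\subseteq\{0,1\}^{<\NN}$ with no dead ends, and set $P=[T]$, the (automatically \pizu) set of its infinite paths. The first observation that makes the task manageable is that \emph{no computable member} need not be arranged separately: since $T$ is perfect, $P$ is uncountable, and if some $x\in P$ were computable then $x\turinf y$ would hold trivially for every other member $y$, contradicting the incomparability we are about to impose. So the whole construction reduces to securing, for every pair of distinct members, that neither computes the other.

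Next I would lay out the requirements. For every index $e$ and every node $\sigma\in\{0,1\}^{<\NN}$ I introduce a requirement $R_{e,\sigma}$ asserting: if $x,y\in P$ first diverge at $\sigma$, with $y$ passing through the $0$-child and $x$ through the $1$-child of the image of $\sigma$, then $\Phi_e^{\,y}\neq x$. Any two distinct members diverge at a unique node, so meeting all $R_{e,\sigma}$ (together with the symmetric versions) yields $x\not\turinf y$ and $y\not\turinf x$ for every distinct pair, which is exactly pairwise Turing-incomparability. The construction proceeds in stages, defining a computable monotone tree-map $f:\{0,1\}^{<\NN}\to\{0,1\}^{<\NN}$ with $f(\sigma 0),f(\sigma 1)$ incomparable extensions of $f(\sigma)$; the range of $f$ generates $T$, so perfectness and computability are built in, and each requirement is handled by one finite, permanent extension (no injury).

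The mechanism for a single $R_{e,\sigma}$ rests on the usual splitting dichotomy. Working above the two children $f(\sigma 0)$ (the oracle/$y$-side) and $f(\sigma 1)$ (the output/$x$-side), I would search for an extension $\mu\sqsupseteq f(\sigma 0)$ and a position $n$ with $\Phi_e^{\mu}(n)\!\downarrow$. If one is found, then, since the $x$-side bits are chosen independently of the $y$-side, I extend $f(\sigma 1)$ so as to fix the $n$-th bit of every $x$ above it to differ from $\Phi_e^{\mu}(n)$, and route the $y$-branch through $\mu$; this permanently guarantees $\Phi_e^{\,y}\neq x$. If no such convergent extension ever exists, then $\Phi_e^{\,y}$ is partial for every $y$ above $f(\sigma 0)$, so it is certainly not the total path $x$, and the requirement is met vacuously by the default extension.

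The step I expect to be the main obstacle is exactly this dichotomy: one must verify that the two alternatives are genuinely exhaustive and that, in the convergent case, the freedom on the output side is never exhausted, \idest that independent control of the $x$-side and $y$-side bits always leaves an undetermined position to flip. Care is also needed so that a single requirement handles \emph{all} pairs diverging at $\sigma$ uniformly rather than just one pair, and that the bookkeeping interleaves the countably many $R_{e,\sigma}$ without destroying earlier commitments; since every action is a finite extension of a computable tree this is finite-injury-free, but checking that perfectness survives every extension, and that $[T]$ is nonempty, is where the verification is most delicate. The result itself is due to Jockusch and Soare \cite{JockuschSoare72a}, whose argument I am reconstructing here.
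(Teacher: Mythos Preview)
The paper does not supply a proof of this theorem: it is quoted as a known result of Jockusch and Soare, with a pointer to \cite{JockuschSoare72a,JockuschSoare72b} and \cite{CenzerRec} for the argument. Your sketch follows the standard perfect-tree construction and is on the right track; the observation that pairwise incomparability already rules out computable members is correct and is a pleasant simplification.

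There is, however, one genuine gap in the sketch as written. You present the two alternatives---either some finite $\mu\sqsupseteq f(\sigma 0)$ makes $\Phi_e^\mu(n)$ converge for a fresh $n$, or no such $\mu$ exists and $\Phi_e^y$ is partial for every relevant $y$---as a case split that the construction itself performs. But which alternative holds is a $\Sigma^0_1$/$\Pi^0_1$ question and cannot be decided; an unbounded search for $\mu$ would destroy the computability of $T$. The standard remedy is to bound the search by the stage number, perform the default extension when nothing is found within that bound, and allow the requirement to act at any later stage at which a suitable $\mu$ turns up. This is still injury-free, but it means a requirement may be visited infinitely often (and simply never act) in the partial case, rather than being ``handled by one finite, permanent extension'' as you put it. A second point that needs to be made explicit: by the time $R_{e,\sigma}$ receives attention, the tree above $f(\sigma 0)$ and $f(\sigma 1)$ will in general already have many leaves, so ``routing the $y$-branch through $\mu$'' really means extending \emph{every} current oracle-side leaf (each through its own convergent string, each with its own fresh position) and then extending \emph{every} output-side leaf to disagree at all of those positions. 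None of this is hard, but it is exactly the bookkeeping you flag as delicate, and it must be spelled out for the argument to go through.
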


The proof of this result can be found in Cenzer and Remmel's upcoming book
\cite{CenzerRec} or in the original articles by Jockusch and Soare
\cite{JockuschSoare72a,JockuschSoare72b}.

This means that one cannot expect a full recursive homeomorphism, i.e. without
removal of the computable points. Furthermore, this shows that in general, when
a \pizu class $P$ has no computable member, it is not true that one can find a
SFT with the same set of Turing degrees.

The main idea of the proof is that any subshift contains a minimal
subshift. If the subshift has no computable points (actually, no
periodic points), this minimal subshift contains only strictly
quasiperiodic points. We will then use some combinatorial properties
of this minimal subshift to obtain our results.

\subsection{One-dimensional subshifts}

We start with a technical lemma that will allow us to prove the theorem:

\begin{lemma}\label{lem:1dtwowords}
 Let $x$ be a strictly quasiperiodic point of a minimal one-dimensional subshift
$A$ and $\prec$ be an order on $\Sigma_A$. For any word $w$
extensible to $x$, there exist two words $w_0$ and $w_1$ such that:
\begin{itemize}
 \item $w$ appears exactly twice in both $w_0$ and $w_1$,
 \item let $a$ and $b$ (resp. $c$ and $d$) be the first differing letters in
the blocks directly following the first and second occurrence of $w$ in $w_0$
(resp. $w_1$), then $a\prec b$ (resp. $d\prec c$). 
\end{itemize}
See Figure~\ref{fig:1dtwowords} for an illustration of $w_0$ and $w_1$.
\end{lemma}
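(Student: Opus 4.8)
The plan is to exploit two features of $x$. Quasiperiodicity forces $w$ to recur with bounded gaps, so it occurs infinitely often in both directions; and strict quasiperiodicity (aperiodicity inside an infinite minimal subshift) forbids any eventually periodic tail, so the right-infinite continuations of $x$ immediately after the various occurrences of $w$ cannot all coincide. I would begin by fixing the bi-infinite list of occurrence positions $\dots < n_{-1} < n_0 < n_1 < \dots$ of $w$ in $x$, and for each such position $n$ the continuation $\phi(n)=x_{n+|w|}x_{n+|w|+1}\cdots$. The order $\prec$ on $\Sigma_A$ extends to continuations by comparing their first differing letter, and the goal becomes: produce one consecutive pair of occurrences realizing an increasing first difference (this gives $w_0$) and one realizing a decreasing first difference (this gives $w_1$).

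To locate the first difference uniformly, I would let $v$ be the longest common prefix of all the $\phi(n)$ and $\delta=|v|$. Since the $\phi(n)$ are not all equal, $v$ is finite, every occurrence of $w$ is followed by exactly $v$, and at offset $\delta$ at least two distinct letters occur, say $a\prec b$. Thus each occurrence $n$ carries a well defined \emph{branch letter} $\ell(n)=x_{n+|w|+\delta}$, and two occurrences first differ precisely at offset $\delta$ as soon as their branch letters differ, the order of the branch letters giving the order of the first difference. I would then set $w_0$ to be the factor of $x$ running from a chosen occurrence to $\delta+1$ letters past the next occurrence, so that it contains $w$ exactly twice and already exhibits both branch letters; $w_1$ is built the same way from the opposite pair.

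It remains to guarantee the two needed consecutive pairs. The sequence of branch letters $(\ell(n_k))_k$ read along successive occurrences is a coding of the minimal point $x$, hence itself recurrent, and it is non-constant because both $a$ and $b$ occur. A non-constant, recurrent, bi-infinite word over the finite ordered alphabet $(\Sigma_A,\prec)$ must contain both an adjacent increase $\ell(n_k)\prec\ell(n_{k+1})$ and an adjacent decrease $\ell(n_{k+1})\prec\ell(n_k)$; otherwise the sequence would be monotone along $\ZZ$ and therefore eventually constant, contradicting recurrence. These two adjacencies furnish $w_0$ and $w_1$ respectively.

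The main obstacle is the interaction between the two demands on $w_0$: keeping exactly two occurrences of $w$ while still displaying the distinguishing letters. Because $w$ may overlap itself, a third occurrence could begin within the $\delta+|w|$ letters appended after the second occurrence. Pinning the first difference to the fixed small offset $\delta$, via the longest common prefix $v$, is precisely what keeps the distinguishing letter adjacent to the second occurrence and thus minimizes this window; to finish I would use the bounded return times from quasiperiodicity, together with a pigeonhole choice of the consecutive pair (and, if necessary, replacement of $w$ by the longer block $wv$, which has the same occurrence positions), to select a pair whose following gap exceeds $\delta$, so that no spurious third occurrence of $w$ enters $w_0$. The other point requiring care is the very first claim, that the continuations are not all equal, i.e.\ that an eventually periodic tail is impossible in an infinite minimal subshift; I would justify this separately before the rest goes through.
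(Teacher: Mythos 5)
Your argument is essentially the paper's own proof: your longest common prefix $v$ is the paper's word $y$ (the largest word immediately following every occurrence of $w$), your branch letters at offset $\delta$ are the letters $a,b,c,d$ following each occurrence of $wy$, and your recurrence/non-monotonicity argument for finding both an increasing and a decreasing adjacent pair is the same in substance as the paper's use of quasiperiodicity to place an occurrence of $wyb$ before one of $wya$. The only real difference is that you explicitly flag (and sketch a repair for) the possibility of a spurious third occurrence of $w$ overlapping the final distinguishing letter, a corner case the paper's proof passes over silently after only ruling out self-overlap of $w$ inside $wy$.
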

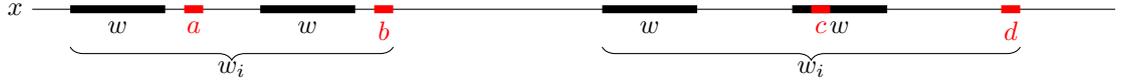
\begin{figure}
\centering
 \begin{tikzpicture}[scale=0.25]
  \draw (-7,0) node[left] {$x$} -- (50,0);
\begin{scope}[shift={(-20,0)}]
  \draw[line width=3pt] (15,0) -- +(5,0) node[midway,below] {$w$};
  \draw[line width=3pt] (25,0) -- +(5,0) node[midway,below] {$w$};
  \draw[line width=3pt,color=red] (21,0) -- +(1,0) node[midway,below] {$a$};
  \draw[line width=3pt,color=red] (31,0) -- +(1,0) node[midway,below] {$b$};
  \draw[decorate,decoration={brace,amplitude=5pt}] (32,-2) -- (15,-2)
  node[midway,below=2pt] {$w_i$};
\end{scope}
\begin{scope}[shift={(8,0)}]
  \draw[line width=3pt] (15,0) -- +(5,0) node[midway,below] {$w$};
  \draw[line width=3pt] (25,0) -- +(5,0) node[midway,below] {$w$};
  \draw[line width=3pt,color=red] (26,0) -- +(1,0) node[midway,below] {$c$};
  \draw[line width=3pt,color=red] (36,0) -- +(1,0) node[midway,below] {$d$};
  \draw[decorate,decoration={brace,amplitude=5pt}] (37,-2) -- (15,-2)
  node[midway,below=2pt] {$w_i$};
\end{scope}
  
 \end{tikzpicture}
 \caption{\label{fig:1dtwowords} Two nearest $w$ blocks, the first differing
letter $a,b$ and $c,d$ in their following blocks, and how they form the $w_i$.
Note that the first differing letter might in some cases be
inside the second occurrence of $w$, as illustrated on the right with $c,d$.
 }
\end{figure}
\begin{proof}
By quasiperiodicity of $x$, $w$ appears  infinitely many times in $x$.
By non periodicity, any two occurrences of $w$ must be followed by
eventually distinct words.
Let $y$ be the largest word so that whenever $w$ appears in $x$, it is immediately followed by
$y$. Note that $w$ appears only once in $wy$, otherwise
$x$ would be periodic.

By definition of $y$, the letters after each occurrence of $wy$ cannot
be all the same. So there exist two consecutive occurrences of
$wy$ with differing next letters $a,b$ with, e.g., $a \prec b$ (the
other case being similar). $w_0$ is then defined as the smallest word
containg both occurrences of $wy$ and these letters $a,b$.

Now $x$ is quasiperiodic, hence some occurrence of $wyb$ must also
appear before some occurrence of $wya$, so we can find
between these two positions two occurrences of $wy$ with differing next
letters $c,d$ with $d \prec c$. We can then define $w_1$ similarly.
\end{proof}

\begin{theorem}\label{thm:1dcompdiffdegrees}
Let $A$ be a minimal one-dimensional subshift containing only strictly quasiperiodic points and
$x$ a point of $A$. Then for any Turing degree $d$ such that $\turdeg x\leq
d$, there exist a point $y\in A$ with Turing degree $d$.
\end{theorem}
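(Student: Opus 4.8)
The plan is to code a set $D$ of Turing degree exactly $d$ into a suitable point $y\in A$ so that $y\turequiv D$, using Lemma~\ref{lem:1dtwowords} as the coding gadget. Since $\turdeg x\leq d$, fix a representative $D$ of $d$ with $x\turinf D$, fix the order $\prec$ on $\Sigma_A$, and fix a short word $u_0$ extensible to $x$ (for instance a single letter occurring in $x$). First I would build a nested sequence of words $u_0\sqsubset u_1\sqsubset\cdots$ in which each $u_n$ is a prefix of $u_{n+1}$: given $u_n$, apply Lemma~\ref{lem:1dtwowords} with $w=u_n$ to obtain the two words $w_0,w_1$, each beginning with $u_n$ and containing $u_n$ exactly twice, and set $u_{n+1}=w_{b_n}$, where $b_n$ is the $n$-th bit of $D$. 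The right-infinite limit $r=\lim_n u_n$ then records the bits of $D$ in a decodable way.

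To see that $D\turinf y$ (for $y$ the bi-infinite point built below, whose right half is $r$), I would decode the bits of $D$ from $r=y_{[0,\infty)}$ alone, sequentially. Knowing $u_n$ as a prefix of $r$ of known length $\ell_n$, I locate the second occurrence of the word $u_n$ in $r$; by Lemma~\ref{lem:1dtwowords} it occurs exactly twice inside $u_{n+1}$, so this second occurrence lies inside an already-read prefix. I then compare the two continuations of $u_n$ up to their first difference and read off $b_n$ from the $\prec$-order of the differing letters, exactly as in Figure~\ref{fig:1dtwowords}: the first continuation's letter $\prec$ the second's gives $b_n=0$, the reverse gives $b_n=1$. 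The same comparison reveals $\ell_{n+1}=|u_{n+1}|$, so the procedure continues. This uses only $y$ together with the fixed data $u_0$ and $\prec$, hence $D\turinf y$.

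For $y\turinf D$, I would first compute $r$ from $D$. Computing $u_{n+1}$ from $u_n$ amounts to producing the words $w_0,w_1$ of Lemma~\ref{lem:1dtwowords}, which I do by searching $x$ (available since $x\turinf D$) for two consecutive occurrences of $u_n$ whose continuations first differ in the prescribed $\prec$-order and between which $u_n$ occurs exactly twice; such a pattern is guaranteed to exist by the lemma, and every condition is a finite check on the oracle $x$, so the search halts. This yields $r\turinf x\oplus D\turinf D$. It remains to turn $r$ into a genuine bi-infinite point of $A$ without raising its degree: I extend $r$ to the left one letter at a time, each step searching $x$ for an occurrence of the current left-end block at a positive position and prepending the letter preceding it. Quasiperiodicity guarantees such an occurrence, so each step halts and keeps the block a factor of $x$; by minimality the language of $A$ is exactly the set of factors of $x$, so the resulting bi-infinite sequence $y$ lies in $A$. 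This left half is computed from $r$ and $x$, whence $y\turinf r\oplus x\turinf D$, while $r=y_{[0,\infty)}\turinf y$ gives $D\turinf y$. Therefore $\turdeg y=d$.

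The main obstacle is the forward (effectivity) direction: one must ensure that the combinatorial pattern defining $w_0,w_1$ is found by a \emph{halting} search in $x$ rather than merely semi-decided. This is precisely why Lemma~\ref{lem:1dtwowords} is phrased through the maximal common continuation, guaranteeing that $w$ occurs exactly twice, so that the required pair of consecutive occurrences genuinely exists and the exactness can be checked on a finite window. A secondary delicate point is extending $r$ to a bi-infinite point of $A$ while keeping the degree equal to $d$: it is essential that the left extension be driven by $x$ (of degree $\leq d$) and not by an uncontrolled choice, so that no new information is introduced.
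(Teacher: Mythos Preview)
Your encoding/decoding mechanism via Lemma~\ref{lem:1dtwowords} is correct and is essentially the paper's strategy: the paper also builds computable maps $f:A\times\cantor\to A$ and $g:A\to\cantor$ with $g(f(x,s))=s$, using exactly the comparison of first differing letters after two consecutive occurrences of the current block. The difference is purely in how the bi-infinite point is produced. The paper takes $C_{-1}=x_0$ and, at each stage, cuts from $x$ a window around the two found occurrences so that the \emph{first} occurrence of $C_i$ sits at the \emph{centre} of $C_{i+1}$; the nested centred blocks then converge directly to a bi-infinite $c\in A$, and no separate left extension is needed.

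Your variant builds only a right-infinite $r$ (with $u_n$ as prefixes) and then extends to the left, and this last step has a genuine gap. Whatever ``current left-end block'' means, the procedure fails: if it is $y_{[-(m-1),0]}$ (or any block not reaching into $r$), then after prepending you only know that $y_{[-m,0]}$ is a factor of $x$, and nothing forces $y_{[-m,N]}$ to be a factor for $N>0$; the occurrence of $y_{[-(m-1),0]}$ you found in $x$ need not be followed by $r_1,r_2,\dots$, so words straddling the origin can fall outside the language of $A$ and the limit $y$ need not lie in $A$. If instead it is $y_{[-(m-1),m]}$ (reaching into $r$), you have no guarantee that this word is a factor of $x$ at all: step $m-1$ only certified $y_{[-(m-1),m-1]}$, and appending $r_m$ may leave the language, so the search need not halt. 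In a general minimal subshift, deciding factorhood from $x$ is only semi-decidable, so there is no obvious computable way to pick a left extension of $r$ that stays compatible with all of $r$.

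The fix is exactly the paper's: when you locate in $x$ the two consecutive occurrences of $u_n$ with the desired $\prec$-order, take as $u_{n+1}$ not just the word starting at the first occurrence, but a longer window of $x$ extended to the left so that the first occurrence of $u_n$ sits at the centre. This single change makes the $u_n$ nested centred factors of $x$, the limit is automatically a point of $A$, and your decoding argument goes through unchanged (one looks for the next occurrence of $C_i$ to the right of centre).
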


\begin{proof}
To prove the theorem, we will give two computable functions $f: A\times
\cantor\to A$ and $g:A\to\cantor$ such that for any $x\in
A$ and $s\in\cantor$ we have $g(f(x,s))=s$. This means in terms of
Turing degrees:
\begin{equation}\label{inequality}
\turdeg s \leq \turdeg f(x,s) \leq \sup_T(\turdeg x,\turdeg s)
\end{equation}

That is to say, we give two algorithms, one ($f$) that given a point $x$ of $A$
and a sequence $s$ of $\cantor$ reversibly computes a point of $A$ that embeds
$s$, the second ($g$) retrieves $s$ from the computed point.

Let us now give $f$. Let $\prec$ be an order on $\Sigma_A$. Given a point $x\in
A$ and a sequence $s\in\cantor$, 
$f$ recursively constructs another point of $A$: it starts with a block
$C_{-1}=x_0$ and recursively constructs bigger and bigger blocks $C_i$ such that
$C_{i+1}$ has $C_i$ in its center. Furthermore these blocks are each centered
in $0$. So that the sequence $C_{-1}\to C_0 \to C_1 \to \dots \to C_i \to \dots$
converges to a point $c$ of $A$ having all $C_i$'s in its center. It is
sufficient to show then how $C_{i+1}$ is constructed from $C_{i}$.

$f$ works as follows: 
It searches for two consecutive occurrences of $C_i$ in $x$,
where the two first differing letters satisfy $a\prec b$ if $s_{i+1} =
0$ and $b \prec a$ if $s_{i+1} = 1$.
We know that $f$ will eventually succeed in finding these occurrences
due to Lemma ~\ref{lem:1dtwowords}.

Now we define $C_{i+1}$ as the word in $x$ where we find these two
occurrences, correctly cut so that the first occurrence of $C_i$ is
at its center, and its last letter is the differing letter of the
second occurrence. See Figure~\ref{fig:1dextwords}.

We thus have $f$, which is clearly computable. We give now $g$.

Given $C_i$ and $c$, one can compute $s_{i+1}$ easily: we just have to
look for the second occurrence of $C_i$ in $c$, the first one being in its
center. We then check whether the first differing letters between the blocks
following each occurrence are such that $e\prec f$ or $f\prec e$. This also
gives us $C_{i+1}$.

This means that from $c$, one can recover $s$. We know $C_{-1}=c_0$ and from
this information, we can get the rest: from $c$ and $C_i$, one computes easily
$C_{i+1}$ and $s_i$. We have constructed our function $g$.

So now if we take a sequence $s$ such that $\turdeg s > \turdeg x$, we can take
$y=c=f(x,s)$. It follows from inequality~\ref{inequality} that it has the same Turing degree as $s$
since $\turdeg s = \sup_T(\turdeg x,\turdeg s)$.

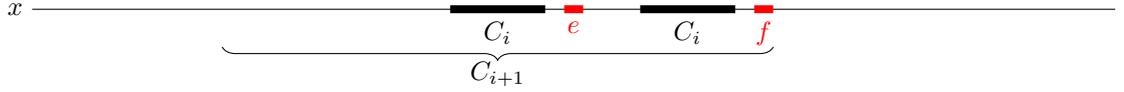
\begin{figure}
 \centering
 \begin{tikzpicture}[scale=0.25]
  \draw (-7,0) node[left] {$x$} -- (50,0);
\begin{scope}[shift={(0,0)}]
  \draw[line width=3pt] (15,0) -- +(5,0) node[midway,below] {$C_i$};
  \draw[line width=3pt] (25,0) -- +(5,0) node[midway,below] {$C_i$};
  \draw[line width=3pt,color=red] (21,0) -- +(1,0) node[midway,below] {$e$};
  \draw[line width=3pt,color=red] (31,0) -- +(1,0) node[midway,below] {$f$};
  \draw[decorate,decoration={brace,amplitude=5pt}] (32,-2) -- (3,-2)
  node[midway,below=2pt] {$C_{i+1}$};
\end{scope}
  
 \end{tikzpicture}
 \caption{\label{fig:1dextwords} How we construct $c_{i+1}$ from $c_i$. When
$s_{i+1}=0$, we have $e\prec f$ and $f\prec e$ otherwise. The words of
Lemma~\ref{lem:1dtwowords} are completed on the left with the block preceeding
them in $x$.
 }
\end{figure}

\end{proof}

\begin{corollary}
Every non-empty one-dimensional subshift $S$ containing only non computable
points has points
with different but comparable degrees.
\end{corollary}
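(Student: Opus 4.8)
The plan is to reduce this corollary to Theorem~\ref{thm:1dcompdiffdegrees}, which already handles the case of a minimal subshift whose points are all strictly quasiperiodic. First I would use the fact, recalled in the preliminaries, that every non-empty subshift contains a minimal subshift $A\subseteq S$. The key observation is that if $S$ has no computable points, then in particular $A$ has no computable points, and hence $A$ has no periodic points (since any periodic point is computable). A minimal subshift with no periodic point contains only strictly quasiperiodic points: minimality forces all points to be quasiperiodic, and the absence of periodic points upgrades ``quasiperiodic'' to ``strictly quasiperiodic'' throughout. Thus $A$ satisfies the hypotheses of Theorem~\ref{thm:1dcompdiffdegrees}.

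Next I would pick any point $x\in A$. Since $x$ is non-computable (as $A\subseteq S$ has only non-computable points), its Turing degree $\turdeg x$ is non-zero. Now I want to produce a second point $y\in A\subseteq S$ whose degree is strictly above $\turdeg x$; together $x$ and $y$ will be the desired pair with different but comparable degrees. To get a degree strictly above $\turdeg x$, I would choose a sequence $s\in\cantor$ with $\turdeg s>\turdeg x$ — for instance take $s$ computing the Turing jump of $x$, or any sequence Turing-above $x$ that is not Turing-equivalent to $x$ (such sequences exist since the degrees above any fixed degree are unbounded). Applying Theorem~\ref{thm:1dcompdiffdegrees} with this degree $d=\turdeg s$ yields a point $y\in A$ with $\turdeg y=d>\turdeg x$.

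Finally I would assemble the conclusion: $x,y\in A\subseteq S$, both are members of $S$, their degrees satisfy $\turdeg x<\turdeg y$, so they are \emph{different} (distinct degrees) and \emph{comparable} (indeed $\turdeg x\leq\turdeg y$). This is exactly the assertion of the corollary.

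The only genuinely delicate point is the passage from ``no computable points'' to ``$A$ consists of strictly quasiperiodic points.'' The implication ``no periodic point $\Rightarrow$ strictly quasiperiodic'' in a minimal subshift is where one must be careful: minimality gives quasiperiodicity of every point, and then non-periodicity is precisely the extra condition in the definition of strict quasiperiodicity. One should also double-check that the chosen $s$ with $\turdeg s>\turdeg x$ genuinely satisfies the hypothesis $\turdeg x\leq d$ of the theorem and yields a point of strictly larger degree; using $d=\turdeg(x')$ (the jump) makes this transparent, since $\turdeg x<\turdeg(x')$ always holds. Everything else is a direct invocation of the preceding results.
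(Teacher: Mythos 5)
Your proof is correct and follows the same route as the paper: take a minimal subshift $A\subseteq S$, note that the absence of computable (hence of periodic) points forces all points of $A$ to be strictly quasiperiodic, and then apply Theorem~\ref{thm:1dcompdiffdegrees} to a point $x\in A$ and a degree $d$ strictly above $\turdeg x$. The paper states this in one line; you have merely filled in the (correct) details.
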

\begin{proof}
	Take any minimal subshift of $S$. It must contain only strictly
	quasiperiodic points, so the previous theorem applies.
\end{proof}	

For effective subshifts, we can do better:
\begin{lemma}\label{lemma:zeroprime}
	Every non-empty one-dimensional effective subshift $S$ contains a minimal
	subshift $\tilde S$ whose language is of Turing degree less than or equal to $0'$.
\end{lemma}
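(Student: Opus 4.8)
The plan is to carve out a minimal subshift $\tilde S\subseteq S$ by a greedy forbidding process driven by the oracle $0'$, arranged so that the language of $\tilde S$ is \emph{literally} the complement of the set of words we decide to forbid. Fix a computable enumeration $w_0,w_1,\dots$ of all finite words over $\Sigma_S$ and build an increasing finite set $F$ of forbidden words, starting from $F=\emptyset$. At stage $n$, consider the effective subshift $Y_n=S\cap\{x\mid x\text{ avoids every word of }F\cup\{w_n\}\}$ and ask whether $Y_n=\emptyset$; if $Y_n\neq\emptyset$ add $w_n$ to $F$, otherwise leave $F$ unchanged. Finally set $\tilde S=\bigcap_n\bigl(S\cap\{x\mid x\text{ avoids }F\}\bigr)$.

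The complexity bookkeeping is the crux. Emptiness of an effective subshift is a $\Sigma^0_1$ property of its (uniformly obtained) c.e.\ set of forbidden patterns: by compactness $Y_n=\emptyset$ iff there is some $m$ for which every word of length $m$ contains a forbidden factor, and the inner statement is $\Sigma^0_1$. Since $F$ is finite and already computed at each stage, the entire sequence of decisions, and hence $F$ itself, is computable from $0'$, so $F\turinf 0'$. I then check that $\tilde S$ is nonempty, minimal, and that $L(\tilde S)=\Sigma_S^*\setminus F$. Nonemptiness is immediate, as $\tilde S$ is a nested intersection of nonempty compact sets (each subshift we committed to was nonempty). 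For the language identity: if $w$ was forbidden then every point of $\tilde S$ avoids it, so $w\notin L(\tilde S)$; conversely if $w=w_n$ was \emph{not} forbidden, then at stage $n$ the set $S\cap\{x\mid x\text{ avoids the then-current }F\cup\{w\}\}$ was empty, i.e.\ every point of $S$ avoiding the then-current $F$ contains $w$, and since $\tilde S$ is a nonempty subset of that set, $w$ occurs in every point of $\tilde S$, giving $w\in L(\tilde S)$. Hence $L(\tilde S)=\Sigma_S^*\setminus F\turinf 0'$. The very same dichotomy shows that \emph{every} word of $L(\tilde S)$ occurs in every point of $\tilde S$, so all points of $\tilde S$ share the same language, which is exactly minimality.

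The step I expect to be the main obstacle — and the reason the naive approach lands at $0''$ rather than $0'$ — is computing the language of the limit subshift. Normally $w\in L(\tilde S)$ is an extensibility question, a $\Pi^0_1$ condition over the forbidden set; with a forbidden set that is only $0'$-computable this would be $\Pi^0_1(0')$, hence merely $\turinf 0''$. Minimality is precisely what lets me bypass this: because the greedy process decides membership in the language \emph{at the moment each word is processed}, $L(\tilde S)$ coincides with $\Sigma_S^*\setminus F$ and requires no further extensibility computation, keeping the whole construction at $0'$.
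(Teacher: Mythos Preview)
Your proof is correct and follows essentially the same greedy-forbidding construction as the paper: enumerate all words, add each to the forbidden set whenever doing so preserves nonemptiness, and use $0'$ to decide the $\Sigma^0_1$ emptiness questions. Your write-up is in fact more complete than the paper's, spelling out why emptiness is $\Sigma^0_1$, why the complement of the constructed forbidden set is exactly $L(\tilde S)$, and why this yields minimality; the closing remark about why the naive approach would only give $0''$ is a nice addition.
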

$0'$ is the degree of the Halting problem.
\begin{proof}
	Let $\cal F$ be the computable set of forbidden patterns defining $S$.
	Let $w_n$ be a (computable) enumeration of all words.
	Define ${\cal F}_n$ as follows: ${\cal F}_{-1} = \emptyset$.
	Then if ${\cal F}_n \cup {\cal F} \cup \{w_{n+1}\}$ defines a
	non-empty subshift, then ${\cal F}_{n+1} = {\cal F} \cup
	\{w_{n+1}\}$ else ${\cal F}_{n+1} = {\cal F}_n$.
    
	Now take $\tilde {\cal F} = \cup_n {\cal F}_n$. It is clear from
	the construction that $\tilde {\cal F}$ is computable given the Halting
	problem. Moreover $\tilde {\cal F}$ defines a non-empty, minimal
	subshift $\tilde S$. More exactly the complement of $\tilde {\cal F}$ is
	exactly the set of patterns appearing in $\tilde S$.
\end{proof} 
This lemma cannot be improved: an effective subshift is built in
Ballier/Jeandel \cite{ballier2010} for which the language of every
minimal subshift is at least of Turing degree $0'$.

Now it is clear that any minimal subshift $\tilde S$ has a point computable
in its language, so that:
\begin{corollary}
	Every non-empty one-dimensional effective subshift with no computable point contains
configurations of every Turing degree above $0'$.
\end{corollary}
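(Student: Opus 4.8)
The plan is to combine Lemma~\ref{lemma:zeroprime} with Theorem~\ref{thm:1dcompdiffdegrees}. First I would apply Lemma~\ref{lemma:zeroprime} to the effective subshift $S$ to extract a minimal subshift $\tilde S \subseteq S$ whose language $\mathcal L$ has Turing degree at most $0'$. Since $\tilde S$ is contained in $S$ and $S$ has no computable point, $\tilde S$ has none either; in particular $\tilde S$ has no periodic point, periodic points being computable. As $\tilde S$ is minimal all its points are quasiperiodic, and having no periodic point they are all strictly quasiperiodic, so $\tilde S$ satisfies the hypotheses of Theorem~\ref{thm:1dcompdiffdegrees}.

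Next I would produce a point $x \in \tilde S$ with $\turdeg x \leq 0'$. This is the step that needs a small argument: using $\mathcal L$ as an oracle I build a bi-infinite configuration greedily, starting from a single letter occurring in $\mathcal L$ and repeatedly appending, on the right and on the left, the least letter (under some fixed ordering of $\Sigma_{\tilde S}$) that keeps the current central word in $\mathcal L$. Because $\tilde S$ is minimal, every word of $\mathcal L$ occurs in every point and hence extends on both sides within $\mathcal L$, so the greedy search always succeeds and the procedure never blocks; the resulting configuration has all its factors in $\mathcal L$ and therefore lies in $\tilde S$. This construction is computable relative to $\mathcal L$, so $\turdeg x \leq \turdeg \mathcal L \leq 0'$.

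Finally I would invoke Theorem~\ref{thm:1dcompdiffdegrees} with $A = \tilde S$ and this point $x$: for every Turing degree $d$ with $\turdeg x \leq d$ there is a point $y \in \tilde S$ of degree exactly $d$. Now let $d$ be any degree with $d \geq 0'$. Then $\turdeg x \leq 0' \leq d$, so the theorem yields a point $y \in \tilde S \subseteq S$ of degree $d$, and $S$ therefore contains configurations of every Turing degree above $0'$. The only delicate point is the computable-in-the-language construction of $x$; everything else is immediate once minimality guarantees strict quasiperiodicity and unblocked greedy extension.
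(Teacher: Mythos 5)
Your proposal is correct and follows the paper's own route exactly: apply Lemma~\ref{lemma:zeroprime} to get a minimal subshift $\tilde S$ with language of degree at most $0'$, observe that $\tilde S$ has a point computable in its language, and conclude via Theorem~\ref{thm:1dcompdiffdegrees}. The only difference is that you spell out the greedy, language-oracle construction of such a point, which the paper dismisses as ``clear''; your argument for it (factor-closure of the language plus extendability of every word of the language on both sides) is sound.
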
	
We do not know if this can be improved. While it is true that
all minimal subshifts in \cite{ballier2010} have a language of
Turing degree at least $0'$, this does not mean that their
configurations have all Turing degree at least $0'$. In the construction of \cite{ballier2010},
there
indeed exist computable minimal points. The construction of Myers
\cite{Myers} has non-computable points, but points of low degree.

\subsection{Two-dimensional SFTs}
We now prove an analogous theorem for two dimensional SFTs.
We cannot use the previous result directly as it is not true  
that any strictly quasiperiodic configuration always contain a
strictly quasiperiodic (horizontal) line. Indeed, there exist
strictly quasiperiodic configurations, even in SFTs with no 
periodic configurations, where some line in the configuration is not
quasiperiodic (this is the case of the ``cross'' in Robinson's
construction \cite{Robinson}) or for which every line is periodic of
different period (such configurations happen in particular in the
Kari-Culik construction \cite{Culik, Kari14}).

We will first try to prove a result similar to
Lemma~\ref{lem:1dtwowords}, for which we will need an intermediate
definition and lemma.

\begin{definition}[line]
 A \emph{line} or \emph{$n$-line} of a two-dimensional configuration $x\in
\Sigma^{\ZZ^2}$ is a function $l:\ZZ\times H \to \Sigma$, with $H=h+\llbracket
0;n-1\rrbracket$, $h\in\ZZ$, such that 
$$x_{\mid \ZZ\times H}=l\textrm{.}$$
Where $n$ is the width of the line and $h$ the vertical placement.
\end{definition}

One can also define a line in a block by simply taking the intersection of both
domains. The notion of quasiperiodicity for lines is exactly the same as the one
for one dimensional subshifts. We need this notion for the following lemma,
that will help us prove the two-dimensional version of
Lemma~\ref{lem:1dtwowords}. We also think that this lemma might be of interest
in itself.

\begin{lemma}\label{lem:onlyquasiperlines}
 Let $A$ be a two-dimensional minimal subshift. There exists a point $x\in A$ such that all its
lines are quasiperiodic.
\end{lemma}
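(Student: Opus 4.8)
The plan is to forget the vertical direction entirely and work only with the horizontal shift $\sigma_{e_1}$, where $e_1=(1,0)$. The right point $x$ will be one that is \emph{uniformly recurrent} under the horizontal $\ZZ$-action alone, and I claim this single property already forces every line to be quasiperiodic. Indeed, suppose $x\in A$ has the property that for every finite window $W\subseteq\ZZ^2$ there is an integer $N$ such that every horizontal interval of length $N$ contains an offset $k$ for which $\sigma_{ke_1}(x)$ agrees with $x$ on $W$. Equivalently, every pattern of $x$ reappears within bounded horizontal gaps \emph{at the same vertical position}. Then fix any width $n$ and height $h$, and let $l=x_{\mid \ZZ\times H}$ with $H=h+\llbracket 0,n-1\rrbracket$ be the corresponding line. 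A pattern of $l$ is exactly an $m\times n$ block of $x$ located at vertical position $h$; by horizontal uniform recurrence this block reappears inside every horizontal window of some bounded length, which is precisely the statement that $l$ is quasiperiodic as a one-dimensional configuration over $\Sigma^n$. Since $n$ and $h$ are arbitrary, all lines of $x$ are quasiperiodic.

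It remains to produce such an $x$. Consider the $\ZZ$-action generated by $\sigma_{e_1}$ alone on the compact space $A$. Since $A$ is nonempty, compact, and $\sigma_{e_1}$ is a homeomorphism, the collection of nonempty closed $\sigma_{e_1}$-invariant subsets of $A$, ordered by reverse inclusion, satisfies the hypotheses of Zorn's lemma: any chain has nonempty intersection by the finite intersection property in a compact space, and the intersection is again closed and $\sigma_{e_1}$-invariant. Hence there is a minimal such subset $Y\subseteq A$ (this is the horizontal analogue of ``every subshift contains a minimal subshift''; note that $Y$ is horizontally invariant but need not be vertically invariant, which does not matter here). Choose any $x\in Y$. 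By minimality of $Y$, the closure of the horizontal orbit $\{\sigma_{ke_1}(x)\mid k\in\ZZ\}$ is all of $Y$, so $x$ lies in a minimal subsystem of the horizontal action and is therefore uniformly recurrent under $\sigma_{e_1}$, which is exactly the property used above. Since $Y\subseteq A$, we have $x\in A$, as required.

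The main obstacle, and the only genuinely delicate point, is the translation between the topological statement (minimality of the orbit closure under the horizontal action) and the combinatorial statement needed (quasiperiodicity of every horizontal strip). The classical fact I rely on is that a point lies in a minimal subsystem if and only if it is uniformly recurrent, i.e.\ its return times to every basic neighborhood form a syndetic set; here a basic neighborhood of $x$ is fixing $x$ on a finite rectangular window, and syndeticity of return times under $\sigma_{e_1}$ says precisely that the corresponding pattern recurs with bounded \emph{horizontal} gap and \emph{no} vertical offset. This ``same vertical level'' feature is what makes the recurrence of a $2$-dimensional block translate into quasiperiodicity of the one-dimensional strip at a fixed height, and it is the reason restricting to the horizontal action (rather than using $2$-dimensional quasiperiodicity of $A$, which is too weak) is essential. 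Observe finally that minimality of $A$ itself is never used beyond guaranteeing $A\neq\emptyset$, so the statement in fact holds for every nonempty two-dimensional subshift.
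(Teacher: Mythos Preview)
Your argument is correct. The crucial point---that a point lying in a minimal subsystem for the \emph{horizontal} $\ZZ$-action has every horizontal strip quasiperiodic---is justified exactly as you say: for any pattern $p$ occurring in a strip of $x$, the corresponding cylinder meets the minimal orbit closure $Y$, hence $x$ visits it syndetically under $\sigma_{e_1}$, which is bounded-gap recurrence of $p$ at the \emph{same} vertical position. Your closing remark that minimality of $A$ is never used beyond $A\neq\emptyset$ is also right.

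The paper takes a different, more explicit route. Instead of invoking Zorn's lemma once to get a minimal horizontal subsystem, it enumerates all strips $H_i$ and builds a decreasing chain of horizontally invariant closed sets $A_{-1}\supseteq A_0\supseteq A_1\supseteq\cdots$, at each stage forcing the projection onto the $i$-th strip to be a minimal one-dimensional subshift; compactness then gives a point in $\bigcap_i A_i$. Your single application of Zorn delivers a set $Y$ that is in fact contained in any such intersection (a minimal horizontal subsystem automatically has all strip projections minimal, since factors of minimal systems are minimal), so your argument is both shorter and yields a stronger object. The paper's step-by-step construction, on the other hand, is more concrete and avoids the appeal to the standard equivalence between minimality and uniform recurrence, which may be a pedagogical advantage for readers less comfortable with topological dynamics.
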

\begin{proof}
Let $\{(a_i,b_i)\}_{i \in \mathbb{N}}$ be an enumeration of
$\mathbb{Z} \times \mathbb{N}$ and $H_i = a_i + \llbracket 0;b_i\rrbracket$.

If $x$ is a configuration, denote by $p_i(x) : \mathbb{Z} \times H_i
\mapsto \Sigma$ the restriction of $x$ to $\mathbb{Z} \times H_i$.
We will often view $p_i$ as a map from $A$ to $(\Sigma^{H_i})^\mathbb{Z}$.
A \emph{horizontal} subshift is a subset of $\Sigma^{\mathbb{Z}^2}$
which is closed and invariant by a horizontal shift.

We will build by induction a non-empty horizontal
subshift $A_i$ of $A$ with the property that every configuration $x$ of
$A_i$ has the property that every line of support $H_j$, for any $j < i$, is
quasiperiodic.
More precisely, $p_j(A_i)$ will be a minimal subshift.

Define $A_{-1} = A$. If $A_i$ is defined, consider $p_{i+1}(A_i)$.
This is a non-empty subshift, so it contains a minimal subshift $X$.
Now we define the horizontal subshift $A_{i+1} = p_{i+1}^{-1} (X) \cap A_i$.
By construction $p_{i+1}(A_{i+1})$ is minimal.
Furthermore, for any $j<i$, $p_j(A_{i+1})$ is a non-empty subshift, and it is
included in
$p_j(A_j)$, which is minimal, hence it is minimal.

To end the proof, remark that by compactness $\cap_i A_i$ is non-empty,
as every finite intersection is non-empty.
\end{proof}

\begin{lemma}\label{lem:2dtwowords}
Let $A$ be a two-dimensional minimal subshift where all points
(equivalently, some point) have no horizontal period.

Let $x$ be a point of $A$ and $\prec$ be an order on $\Sigma_A$. For
each $n \in \mathbb{N}$, for any $n$-block $w$ extensible to $x$, there exist
two blocks $w_0$
and $w_1$, of the same size, both extensible to $x$ such that:
\begin{itemize}
 \item $w$ appears exactly twice in both $w_0$ and $w_1$, each on the $n$-line
of vertical placement 0.
 \item the first differing letters $e$ and $f$ in the blocks containing $w$ in their center are such
that $e\prec f$ in $w_0$ and $f\prec e$ in $w_1$.
\end{itemize} 
Here the word ``first'' refers to an adequate enumeration of
  $\mathbb{N} \times \mathbb{Z}$.
\end{lemma}
\begin{proof}
As the result is about patterns rather than configurations, and all points of a minimal subshift
have the same patterns, it is sufficient by Lemma~\ref{lem:onlyquasiperlines} to prove the result
when all lines of
$x$ are quasiperiodic.

Since $w$ appears in $x$, it appears a second time on the same $n$-line in $x$.
Since $x$ is not horizontally periodic, both occurrences
are in the center of different blocks. (The place where they differ
may be on a different line, though, if this particular $n$-line is periodic)

Now we use the same argument as lemma~\ref{lem:1dtwowords} on the
$m$-line containing both occurrences of $w$ and the first place they
differ. (Note that we cannot use directly the lemma as this $m$-line might
itself be periodic, but the proof still works in this case)
\end{proof}

\begin{theorem}\label{thm:2danycompdeg}
Let $A$ be a two-dimensional minimal subshift where all points
(equivalently, some point) have no horizontal period and let $x$ be a point of $A$.
Then for any Turing degree $d$
such that $\turdeg x\leq d$, there exists a point $y\in A$ with Turing degree
$d$.
\end{theorem}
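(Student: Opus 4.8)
The theorem is the 2D analogue of Theorem~\ref{thm:1dcompdiffdegrees}. We have a 2D minimal subshift $A$ with no horizontal period, a point $x \in A$, and a degree $d \geq \turdeg x$. We want a point $y \in A$ of degree exactly $d$.

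**The 1D proof strategy.**

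In 1D, the key was Lemma~\ref{lem:1dtwowords}: given a word $w$ extensible to $x$, we can find $w_0$ and $w_1$ encoding a bit. Then we built $f$ and $g$: $f$ reads bits of $s$ and glues together nested blocks $C_{-1} \subset C_0 \subset \dots$, using the bit $s_{i+1}$ to choose which of the two configurations ($a \prec b$ or $b \prec a$) to look for next. The function $g$ reverses this by examining the differing letters.

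**My plan for 2D.**

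The structure should be nearly identical, using Lemma~\ref{lem:2dtwowords} instead of Lemma~\ref{lem:1dtwowords}. So I would set up the same two computable functions $f, g$ satisfying $g(f(x,s)) = s$, giving the chain of inequalities $\turdeg s \le \turdeg f(x,s) \le \sup_T(\turdeg x, \turdeg s)$.

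Here's the plan in more detail.

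**Setting up $f$.**

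I would build nested blocks $C_{-1} \subset C_0 \subset C_1 \subset \dots$, all centered at the origin, converging to a point $c \in A$. Unlike the 1D case where blocks are intervals, these are 2D blocks (rectangles). Starting with $C_{-1} = x_{(0,0)}$ (a single cell), at each stage I look for the two configurations $w_0, w_1$ promised by Lemma~\ref{lem:2dtwowords}, applied to the current block $w = C_i$. The bit $s_{i+1}$ selects whether I want the $e \prec f$ pattern ($w_0$, for $s_{i+1} = 0$) or the $f \prec e$ pattern ($w_1$, for $s_{i+1} = 1$). I then define $C_{i+1}$ as an appropriately cut block containing both occurrences of $C_i$ (on the $n$-line of vertical placement $0$), with the first occurrence centered. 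Lemma~\ref{lem:2dtwowords} guarantees such blocks exist and are extensible to $x$, so $f$ terminates at each step and $c \in A$.

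**Setting up $g$.**

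Given $c$ and $C_i$, I recover $s_{i+1}$ by locating the second occurrence of $C_i$ (the first is at the center), examining the first differing letters $e, f$ between the two surrounding blocks (with respect to the fixed enumeration of $\mathbb{N} \times \mathbb{Z}$), and outputting $0$ if $e \prec f$, else $1$. This simultaneously recovers $C_{i+1}$. Since $C_{-1} = c_{(0,0)}$ is recoverable from $c$, induction gives all of $s$, so $g(f(x,s)) = s$.

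**The main obstacle.**

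The key subtlety, and the place where the 2D argument genuinely differs from 1D, is the \emph{cutting} step in defining $C_{i+1}$. In 1D a word is cut cleanly on both ends. In 2D a ``block'' has a vertical extent as well, and Lemma~\ref{lem:2dtwowords} places both occurrences of $w$ on the $n$-line of vertical placement $0$ but warns that the first differing letters $e,f$ \emph{may lie on a different line} (possibly far below or above) if the relevant $n$-line happens to be periodic. So $C_{i+1}$ must be a rectangle tall enough to contain both occurrences of $C_i$ and the differing cell, and wide enough on both horizontal sides, while keeping the first occurrence of $C_i$ exactly centered. I must verify the cut is well-defined, that the resulting $C_{i+1}$ contains $C_i$ in its center, and crucially that $g$ can \emph{locate} the differing letter using only the fixed enumeration of $\mathbb{N} \times \mathbb{Z}$ — the enumeration must be chosen so that ``first differing letter'' is unambiguous and computable from $c$ and $C_i$ alone. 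I would handle this by fixing, once and for all, the enumeration referenced at the end of Lemma~\ref{lem:2dtwowords}, and noting that both $f$ and $g$ use the same enumeration, so the encoding and decoding agree.

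**Concluding.**

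Finally, given the target degree $d \geq \turdeg x$, choose $s \in \cantor$ with $\turdeg s = d$ (possible since $d \geq \turdeg x \geq \mathbf{0}$, so $d$ is a genuine degree realized by some sequence). Set $y = c = f(x,s)$. The inequality $\turdeg s \le \turdeg y \le \sup_T(\turdeg x, \turdeg s) = \turdeg s = d$ forces $\turdeg y = d$, as desired. This mirrors the closing paragraph of Theorem~\ref{thm:1dcompdiffdegrees} verbatim.
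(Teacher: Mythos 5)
Your proposal is correct and follows essentially the same route as the paper: the authors likewise reduce the theorem to the argument of Theorem~\ref{thm:1dcompdiffdegrees}, substituting Lemma~\ref{lem:2dtwowords} for Lemma~\ref{lem:1dtwowords} and building nested centered blocks $C_i$ whose encoding/decoding realizes the degree $d$. The one subtlety you flag --- that the differing letters may lie on another line, so the search and the notion of ``first'' must be governed by a fixed enumeration of $\NN\times\ZZ$ used consistently by $f$ and $g$ --- is exactly the point the paper addresses by noting one must search all lines simultaneously when constructing $C_{i+1}$.
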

\begin{proof}
 The proof is almost identical to the one of Theorem~\ref{thm:1dcompdiffdegrees},
 Lemma~\ref{lem:2dtwowords} being the two-dimensional counterpart of
Lemma~\ref{lem:1dtwowords}, the only difference being that we have to search simultaneously all
lines for the presence of two occurences of $C_i$ in order to construct $C_{i+1}$. One can see in
Figure~\ref{fig:2dextwords} how the $C_i$'s are contructed in this case.
\end{proof}

\begin{figure}
 \centering
 \begin{tikzpicture}[scale=0.1]
  \draw[very thick] (-5,-5) rectangle +(10,10) node[midway] {$C_i$};
  \draw (-9,-9) rectangle +(18,18);
  \draw[very thick] (20,-5) rectangle +(10,10) node[midway] {$C_i$};
  \draw (16,-9) rectangle +(18,18);
  \draw[very thick] (-34,-34) rectangle (34,34);
  \node at (-30,30) {$C_{i+1}$};
  \fill[color=red] (-5,-5) ++(14,12) rectangle +(-1,-1) node[left,below]
{$e$};
  \fill[color=red] (20,-5) ++(14,12) rectangle +(-1,-1) node[left,below]
{$f$};
 \end{tikzpicture}
 \caption{\label{fig:2dextwords} How $C_{i+1}$ is constructed inductively from
$C_i$. $C_i$ is in the center of $C_{i+1}$. The letters $e$ and $f$ are the
first differing letters in the blocks
containing the $C_i$'s. Whether $e\prec f$ of $f\prec e$ depends on what symbol
we want to embed, 0 or 1.}
\end{figure}
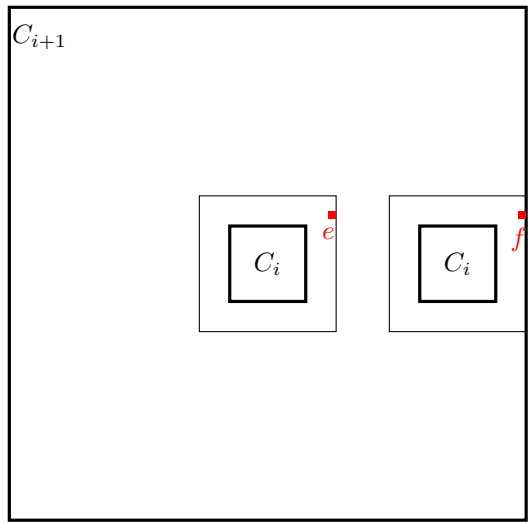

\begin{corollary}\label{cor:2dcompdiffdegrees}
	\label{cor:compdiffdegrees}
Every two-dimensional non-empty subshift $X$ containing only non-computable
points has points
with different but comparable Turing degrees.
\end{corollary}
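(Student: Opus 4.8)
The plan is to reduce the statement to Theorem~\ref{thm:2danycompdeg} by passing to a minimal subshift and controlling its periods. First, since $X$ is non-empty, it contains a minimal subshift $A$ (recalled in Section~\ref{S:defs}). Because $A\subseteq X$, every point of $A$ is non-computable; in particular $A$ contains no periodic point, since every periodic configuration is computable. So the role of the hypothesis ``only non-computable points'' is precisely to force aperiodicity of the minimal subshift, which is what the hard theorem needs.

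Next I would establish a periodicity dichotomy for $A$. In a minimal subshift the set $\{z\in A\mid \sigma_{(p,0)}z=z\}$ is closed (it is an intersection of clopen equality conditions) and shift-invariant (shifts commute, so $\sigma_{(p,0)}z=z$ implies $\sigma_{(p,0)}\sigma_w z=\sigma_w z$ for every $w$); hence it is either empty or all of $A$. Thus a horizontal period vector $(p,0)$ is possessed either by every point of $A$ or by none, with the \emph{same} $p$ throughout — this is exactly the ``equivalently, some point'' clause in Theorem~\ref{thm:2danycompdeg} — and the same holds for vertical periods $(0,q)$. Moreover $A$ cannot simultaneously carry a horizontal period $(p,0)$ and a vertical period $(0,q)$: taking $n$ a common multiple of $p$ and $q$ would give $\sigma_{ne_1}(z)=\sigma_{ne_2}(z)=z$ for every $z$, i.e. genuine periodicity, contradicting the absence of periodic points. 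Therefore at least one of the two coordinate directions carries no period in $A$.

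Exchanging the two coordinates is a computable homeomorphism that preserves minimality and Turing degrees and turns vertical periods into horizontal ones; so, up to this transposition, I may assume that $A$ has no horizontal period. Now Theorem~\ref{thm:2danycompdeg} applies directly: fixing any $x\in A$ and any Turing degree $d$ strictly above $\turdeg x$ (such degrees exist, Turing degrees being unbounded above, e.g. the Turing jump of $\turdeg x$), it yields a point $y\in A$ with $\turdeg y=d$. Then $\turdeg x\turinf \turdeg y$ while $\turdeg x\neq\turdeg y$, so $x$ and $y$ are two points of $X$ with different but comparable degrees, as required. This mirrors the one-dimensional Corollary following Theorem~\ref{thm:1dcompdiffdegrees}.

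The only genuinely delicate point is the dichotomy step: one must verify that ``no horizontal period'' really is an all-or-nothing property on the minimal subshift and that aperiodicity rules out having periods in both coordinate directions at once. Once this is in place, the transposition reduction and the elementary degree bookkeeping (choosing $d$ strictly above $\turdeg x$) are routine.
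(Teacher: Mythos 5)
Your proof is correct, and it follows the paper's skeleton — pass to a minimal subshift $A$, observe that the absence of computable points forbids periodic points, then split on whether $A$ has a horizontal period — but it diverges in how the "$A$ has a horizontal period" case is closed. The paper handles that case by invoking the one-dimensional Theorem~\ref{thm:1dcompdiffdegrees}, implicitly recoding the horizontally periodic configurations as points of a one-dimensional minimal subshift in the vertical direction (over the alphabet of width-$p$ rows), whose points are strictly quasiperiodic because $A$ has no vertical period. You instead note that a horizontal and a vertical period cannot coexist (their existence would make every point periodic, hence computable), transpose the two coordinates — a computable, degree-preserving conjugacy that keeps minimality and turns the missing vertical period into a missing horizontal one — and apply the two-dimensional Theorem~\ref{thm:2danycompdeg} in both branches. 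Both routes are sound; yours has the advantage of needing only the 2D theorem and of making explicit the "all-or-nothing" nature of a fixed period vector on a minimal subshift (closed, shift-invariant, hence empty or everything), which the paper asserts without justification, while the paper's route shows the 1D theorem already suffices once a horizontal period is present. Your degree bookkeeping at the end (taking $d$ to be, say, the jump of $\turdeg x$) is the same elementary step the paper leaves implicit.
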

\begin{proof}
$X$ contains a minimal subshift $A$, which cannot be periodic since it would otherwise contain
computable points. There are now two possibilities:
\begin{itemize}
 \item If $A$ contains a point with a horizontal period, then all points of $A$
have a horizontal period, and the result follows from Theorem~\ref{thm:1dcompdiffdegrees}, since
all points are strictly quasiperiodic in the vertical direction.
 \item Otherwise, it follows from Theorem~\ref{thm:2danycompdeg}.
\end{itemize}

\end{proof}	

Lemma \ref{lemma:zeroprime} is still valid in any dimensions so that
we have:
\begin{corollary}
	Every two-dimensional non-empty effective subshift (in particular
	any non-empty SFT) with no
	computable points contains points of any  Turing degree above $0'$.
\end{corollary}	

We conjecture that a stronger statement is true: The set of Turing degrees of
any subshift with no computable points is upward closed. To prove this,
it is sufficient to prove that for any subshift $S$ and any
configuration $x$ of $S$ (which is not minimal), there exists a
minimal configuration in $S$ of Turing degree less than or equal to the degree of $x$.
We however have no idea how to prove this, and no counterexample comes
to mind.
\bibliographystyle{elsarticle-harv}
\bibliography{biblio,books}
\end{document}